\renewcommand{\algorithmicreturn}{\textbf{Return:}}
\def\bfw{{\mathbf w}}
\def\bfx{{\mathbf x}}
\def\bfy{{\mathbf y}}
\def\bfz{{\mathbf z}}
\def\w{{\bf w}}
\title{Sparse Recovery with Graph Constraints: Fundamental Limits and Measurement Construction}
\author{Meng Wang \ \ \ \ \ Weiyu Xu \ \ \ \ \ Enrique Mallada \ \ \ \ \   Ao Tang
\\
School of ECE, Cornell University, Ithaca, NY 14853, USA }
\newtheorem{theorem}{Theorem}
\newtheorem{defi}{Definition}
\newtheorem{prop}{Proposition}
\begin{document}

\maketitle \thispagestyle{empty} \pagestyle{empty}

\begin{abstract}
This paper addresses the problem of sparse recovery with graph constraints in the sense that we can take additive measurements over nodes only if they induce a connected subgraph. 
We provide explicit measurement constructions for several special graphs. 
A general measurement construction algorithm is also proposed and evaluated. For any given graph $G$ with $n$ nodes, we derive order optimal upper bounds of the minimum number of measurements needed to recover any $k$-sparse vector over $G$ ($M^G_{k,n}$). 
Our study suggests that $M^G_{k,n}$ may serve as a graph connectivity metric.
\end{abstract}

\section{Introduction} \label{sec:intro}

Network monitoring is an important module in the operation and management of communication networks, where network performance
characteristics, 
such as traffic transmission rates and router queueing delays, should be monitored. 
Since monitoring each object in the network directly can be 
operationally difficult or even infeasible, 
the topic of inferring internal characteristics using information from indirect end-to-end (aggregate) measurements, known as Network Tomography, has been widely explored recently \cite{BDPT02,CHNY02,CBSK07,Duffield06,GR11,NT06,ZCB06}. 

%

In practice, the total number of aggregate measurements we can take is small compared with the size of the network. 
However, we can indeed extract the most dominating elements of a high-dimensional signal from low-dimensional non-adaptive measurements. 
With the signal itself being sparse, i.e. most entries are zero,  the recovered signal can be exact even though the number of measurements is much smaller than the dimension of the signal. One practical example is that only a small number of bottleneck links in the communication networks experience large delays.
\textit{Sparse Recovery} addresses the problem of recovering sparse high-dimensional signals from low-dimensional measurements, and has two different  but closely related problem formulations.
One is \textit{Compressed Sensing} \cite{CaT05,CaT06,DoT05,Don06,BGIKS08,FR11}, where the signal is represented by a high-dimensional real vector, 
and an aggregate measurement is the arithmetical sum of the corresponding real entries. The other is \textit{Group Testing} \cite{Dorfman43, DH00},
where the high-dimensional vector is logical, and a measurement is a logical disjunction (\textbf{OR}) on the corresponding logical values.

One key question in both compressed sensing and group testing is to design a small number of non-adaptive measurements (either real or logical) such that all the vectors (either real or logical) up to certain sparsity (the support size of a vector) can be correctly recovered. Most existing results, however, rely critically on the assumption that any subset of the values can be aggregated together \cite{CaT05,Don06}, which is not realistic in the network monitoring problem. Here 
 only objects that can form a path or a cycle on the graph \cite{GR11}, or induce a connected subgraph can be combined together in the same measurement.  Only a few recent works consider graph topological constraints in compressed sensing \cite{CPR07,FR11,HBRN08, XMT11,ZRWQ09} and group testing \cite{BTH11,CKMS10,HPWYC07,TWHR11,WHTJ10}.

Though motivated by the network monitoring application, 
 beyond networks. Indeed, this formulation abstractly models that certain elements cannot be measured together in a complex system. Thus, our work can be useful to other applications besides network tomography.

Here are the main contributions of this paper.

\noindent{\bf (1)} We provide explicit measurement constructions for different graphs. Moveover, the number of our measurements improves over the existing estimates (e.g. \cite{CKMS10,XMT11}) of the minimum number of measurements required to recover sparse vectors over graphs. (Section \ref{sec:special})

\noindent{\bf (2)} 
We propose a design guideline based on \textit{$r$-partition} for general graphs and further show some of its properties. (Section \ref{sec:bound})

\noindent{\bf (3)} A simple measurement design algorithm is proposed for general graphs. (Section \ref{sec:algo}) We evaluate its performance both theoretically and numerically. (Section \ref{sec:simu})

We now start with Section \ref{sec:model} to introduce the model and problem formulation.

\section{Model and Problem Formulation}\label{sec:model}

Consider a graph $G=(V,E)$, where $V$ denotes the set of nodes with cardinality $|V|=n$ and $E$ denotes the set of links. Each node $i$ is associated with a real number $x_i$, and we say vector $\bfx=(x_i, i=1,...,n)$ is associated with $G$. Let $T=\{i~|~ x_i \neq 0\}$ denote the support of $\bfx$, and let $\|\bfx\|_0=|T|$ denote the number of non-zero entries of $\bfx$, we say $\bfx$ is a $k$-sparse vector if $\|\bfx\|_0=k$.

Let $S \subseteq V$ denote a subset of nodes in $G$. Let $E_S$ 
denote the subset of links with both ends in
$S$, then $G_S=(S, E_S)$ is the induced subgraph of $G$. We have the following two assumptions throughout the paper:\\
\noindent {\bf (A1)}: A set $S$ of nodes can be measured together in one measurement if and only if  $G_S$ is connected.

\noindent{\bf (A2)}:
 The measurement is an additive sum of values at the corresponding nodes.


 (A1) captures the graph constraints. One practical example is a sensor network where the nodes represent sensors and the links represent feasible communication between sensors. For the set $S$ of nodes that induce a connected subgraph, one node $u$ in $S$ monitors the total values corresponding to nodes in $S$. Every node in $S$ obtains values from its children, if any, on the spanning tree rooted at $u$, aggregates them with its own value and sends the sum to its parent. 
 Then the fusion center can obtain the sum of values corresponding to all the nodes in $S$ by only communicating with $u$.
 (A2) follows from the additive property of many network characteristics, e.g. delays and packet loss rates \cite{GR11}. 
 However, compressed sensing can also be applied to cases where (A2) does not hold, e.g., the measurements can be nonlinear as in \cite{WWGZMM11,Blumensath10}.

 Let $\bfy \in \mathcal{R}^m$ $(m\ll n)$ denote the vector of $m$ measurements.  Let $A$ be an $m \times n$ measurement matrix with $A_{ij}=1$ ($i=1,...,m$, $j=1,...,n$) if and only if node $j$ is included in the $i$th measurement and $A_{ij}=0$ otherwise. Then we have $\bfy=A\bfx$. We say $A$ can identify all $k$-sparse vectors if and only if $A\bfx_1 \neq A\bfx_2$ for every two different vectors $\bfx_1$ and $\bfx_2$ that are at most $k$-sparse.
 The advantage of sparse recovery is that with the non-adaptive measurement matrix $A$, it can identify $n$-dimensional vectors from $m$ ($m \ll n$) measurements as long as the vectors are sparse.
\begin{figure}[h]
\begin{center}
  \includegraphics[scale=0.25]{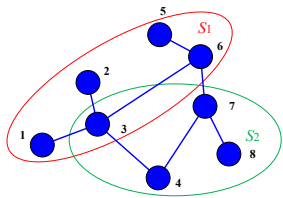}
  \caption{Network Example}\label{fig:example}
  \end{center}
\end{figure}

With the above assumptions, $A$ is a $0$-$1$ matrix and for each row of $A$, the set of nodes that correspond to `1' should form a connected induced subgraph of $G$. In Fig. \ref{fig:example}, we can measure nodes in $S_1$ and $S_2$ separately, and the 
measurement matrix is
\begin{equation}\nonumber
A=\left[ \begin{array}{cccccccc}
1 &  1 &1 &0 & 1 & 1 &0 & 0\\
0 & 0& 1 & 1 & 0 & 0& 1 & 1
\end{array}
 \right].
\end{equation}
We remark here that in group testing with graph constraints, the requirements for the measurement matrix $A$ are the same, while group testing differs from compressed sensing only in that (1) $\bfx$ is a logical vector, and (2) the operation used in each group testing measurement is the logical ``OR''. 
All arguments and results in this paper are in the compressed sensing setup if not otherwise specified, and we also compare our results with group testing for special networks. 
Note that for recovering $1$-sparse vectors, 
the numbers of measurements required by compressed sensing and group testing are the same.
%

Given a graph $G$ with $n$ nodes, let $M^G_{k,n}$ denote the minimum number of non-adaptive measurements needed  to identify all $k$-sparse vectors associated with $G$. Let $M^C_{k,n}$ denote the minimum number of non-adaptive measurements needed in a complete graph with $n$ nodes. In complete graphs, since any subset of nodes can be measured together, 
any $0$-$1$ matrix is a feasible measurement matrix. 
Existing results \cite{CaT06,BGIKS08,XH07} show that with overwhelming probability a random $0$-$1$ matrix with $O( k\log (n/k))$ rows\footnote{We use the notations  $g(n)\in O(h(n))$, $g(n) \in \Omega(h(n))$, or $g(n)=\Theta(h(n))$ if as $n$ goes to infinity, $g(n) \leq c h(n)$, $g(n) \geq c h(n)$ or $c_1 h(n) \leq g(n) \leq c_2 h(n)$ eventually holds for some positive constants $c$, $c_1$ and $c_2$ respectively.}
can identify all $k$-sparse vectors, and we can recover the sparse vector by $\ell_1$-minimization, which returns the vector with the least $\ell_1$-norm\footnote{The $\ell_p$-norm ($p\geq 1$) of $\bfx$ is $\|\bfx\|_p=(\sum_i |x_i|^p)^{1/p}$, and $\|\bfx\|_{\infty}=\max_i |x_i|$.} among those that can produce the obtained measurements. Then we have
\begin{equation}\label{eqn:MC}
M^C_{k,n} = O( k\log (n/k)).
 \end{equation}

 We will use (\ref{eqn:MC}) for the analysis of construction methods.
 Explicit constructions of measurement matrices for complete graphs also exist, e.g., \cite{AHSC09,BGIKS08,CM06,DeVore07,XH07}.  
 We will use $f(k,n)$ to denote the number of measurements to recover $k$-sparse vectors associated with the complete graph of $n$ nodes by a particular measurement construction method, and $f(k,n)$ varies for different construction methods. The key notations are summarized in Table \ref{tbl:notation}.

 \begin{table}
\caption{summary of key notations} \label{tbl:notation}
\begin{tabular}{|c|p{7.1cm}|} %
 \hline
Notation & Meaning\\
\hline
$G_S$ & Subgraph of $G$ induced by $S$\\
\hline
$M^G_{k,n}$ & Minimum number of measurements needed to recover  $k$-sparse vectors associated with $G$ of $n$ nodes.\\
\hline
$M^C_{k,n}$ & Minimum number of  measurements needed to recover  $k$-sparse vectors associated with a complete graph of $n$ nodes.\\
\hline
$f(k,n)$ & Number of measurements constructed to recover $k$-sparse vectors associated with a complete graph of $n$ nodes\\
\hline
\end{tabular}
\end{table}

The questions we would like to address in the paper are:
\begin{itemize}
\item Given graph $G$, what is the corresponding $M_{k,n}^G$?
\item How to explicitly design measurements such that the total number of measurements is close to $M_{k,n}^G$?
\end{itemize}



\section{Sparse Recovery over Special Graphs}\label{sec:special}

In this section, we consider four kinds of special graphs: one-dimensional line/ring network, ring with each node connecting to four closest neighbors, two-dimensional grid and a tree.
 We construct measurements for each graph and later generalize the construction ideas obtained here to general graphs in Section \ref{sec:general}.

\subsection{Line and Ring}\label{sec:line}

First consider one-dimensional line/ring network as shown in Fig. \ref{fig:line}. When later comparing the results here with those in Section \ref{sec:ring4} one can see that the number of measurements required to recover sparse vectors can be significantly different in two graphs that only differ from each other with a small number of links.

In a line/ring network, there is not much freedom in the measurement design since only consecutive nodes can be measured together from assumption (A1). In fact,  \cite{HPWYC07,TWHR11} show that $\lceil\frac{n+1}{2}\rceil$ (or $\lceil\frac{n}{2}\rceil$) measurements
are both necessary and sufficient to recover  $1$-sparse vectors associated with a line (or ring) network with $n$ nodes. Therefore, $\Theta(n)$ measurements are required to recover even one non-zero element associated with a line/ring network.
\begin{figure}
\begin{center}
    \includegraphics[scale=0.3]{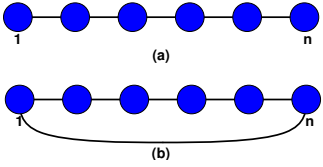}
    \caption{(a) line network (b) ring network}\label{fig:line}
\end{center}
\end{figure}

We next construct $k\lceil \frac{n}{k+1}\rceil+1$ measurements to recover $k$-sparse vectors ($k\geq2$) associated with the line/ring network. Let $t=\lceil \frac{n}{k+1}\rceil$. For every $1\leq i\leq kt+1$, the $i$th measurement goes through all the nodes from $i$ to $\min (i+t-1, n)$. 
\begin{theorem}\label{thm:linekcs}
$k\lceil \frac{n}{k+1}\rceil+1$ above measurements are
sufficient to identify all $k$-sparse vectors associated with a line/ring network with $n$ nodes. 
\end{theorem}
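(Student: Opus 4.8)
The plan is to use the standard null-space reformulation: a $0$--$1$ matrix $A$ identifies all $k$-sparse vectors if and only if $A\bfz\neq\bfzero$ for every nonzero $\bfz$ with $\|\bfz\|_0\le 2k$ (any such $\bfz$ is a difference of two $k$-sparse vectors, and conversely). So I would set $t=\lceil n/(k+1)\rceil$, assume towards a contradiction that $A\bfz=\bfzero$ for some nonzero $\bfz$ with at most $2k$ nonzero coordinates, and show it must in fact have more than $2k$ of them. Note $(k+1)t\ge n$; also the measurement windows never use the wrap-around edge, so the line and the ring are covered by one argument.

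Writing $s_i=\sum_{j=i}^{\min(i+t-1,n)}z_j=0$ for $1\le i\le kt+1$, I would extract three facts. (i) In the ``sliding'' regime $1\le i\le n-t$, both $W_i$ and $W_{i+1}$ have full length $t$, and subtracting the two equations gives $z_i=z_{i+t}$; hence $\bfz$ is constant on each residue class of $\{1,\dots,n\}$ modulo $t$, with common value $c_r$ on class $r$, $r=1,\dots,t$. (ii) In the ``shrinking'' regime $i\ge n-t+1$ we have $W_i=\{i,\dots,n\}$, so telescoping consecutive equations forces $z_i=0$ for $n-t+1\le i\le\min(kt,n-1)$, and also $z_n=0$ once these windows reach the singleton $\{n\}$, which happens precisely when $n\le kt+1$. (iii) The window $W_1=\{1,\dots,t\}$ contains exactly one node of each residue class, so $\sum_{r=1}^{t}c_r=0$.

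Now let $i^*=\max\{i:z_i\neq0\}$. By (i) it cannot lie in $\{1,\dots,n-t\}$ (else $z_{i^*}=z_{i^*+t}=0$ by maximality), and by (ii) it cannot lie in the tail block forced to zero; hence $i^*\ge kt+1$. If $kt+1\ge n$ this is incompatible with $i^*\le n$ and $z_n=0$ from (ii), so $\bfz=\bfzero$ and the degenerate small-$n$ cases (including $t=1$) are done. Otherwise $kt+1<n$, so $t\ge 2$ and $i^*\in\{kt+1,\dots,n\}$; since $i^*+t\ge(k+1)t+1>n$, the index $i^*$ is the top of its residue class, so that class equals $\{i^*,i^*-t,i^*-2t,\dots\}$ and has $\lfloor(i^*-1)/t\rfloor+1\ge k+1$ elements, all nonzero by (i). By (iii), $c_{r(i^*)}\neq0$ forces $c_{r'}\neq0$ for some other class $r'$, which contributes at least $\lfloor n/t\rfloor\ge k$ (using $n>kt$) further nonzero coordinates. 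The two classes are disjoint, so $\|\bfz\|_0\ge(k+1)+k>2k$ -- the desired contradiction.

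The part I expect to take the most care is the boundary bookkeeping in step (ii): pinning down exactly which tail indices are forced to zero, and cleanly separating the degenerate regime $kt+1\ge n$ (where (ii) alone already gives $\bfz=\bfzero$) from the main regime $kt+1<n$. This is also where the parameters are tuned just right -- with $kt+1$ windows the shrinking ones reach up through index $kt$, which pushes the last support index past $kt$, and the choice $t=\lceil n/(k+1)\rceil$ is exactly what makes the residue class of that index (and, via (iii), a second class) large enough that together they exceed $2k$ nonzero entries. Nothing in the argument distinguishes the line from the ring.
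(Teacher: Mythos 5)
Your proof is correct, and it verifies the same underlying criterion as the paper (a $0$--$1$ matrix identifies all $k$-sparse vectors iff every nonzero null vector has at least $2k+1$ nonzero entries), but it gets there by a genuinely different route. The paper first pads the matrix to $(k+1)t$ columns so that every window has full length $t$, and then runs an induction over $k'$ on nested submatrices $A_{k'}$ (the first $tk'+1$ rows and $tk'+t$ columns), with the inductive invariant that every nonzero null vector of $A_{k'}$ has at least $2k'+1$ nonzeros, at least two of which sit in the last $t$ coordinates. You instead work directly with the truncated windows and extract the global structure of the null space in one shot: differencing consecutive full-length windows gives $z_i=z_{i+t}$, so null vectors are constant on residue classes modulo $t$; the shrinking tail windows kill a block of coordinates; the first window forces the class values to sum to zero; and a counting argument on two residue classes yields at least $2k+1$ nonzeros. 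Your version buys a non-inductive, more transparent argument that makes the periodic structure of the null space explicit (and cleanly explains why $t=\lceil n/(k+1)\rceil$ is the right window length), at the cost of the boundary bookkeeping you flag in step (ii); the paper's version buys uniform full-length windows via padding, at the cost of a less illuminating induction. I checked the delicate points of your argument --- that $n-t\le kt$ so every sliding difference uses a valid pair of measurements, that the top of the class of $i^{*}$ exceeds $kt$, and that the second nonzero class has at least $\lfloor n/t\rfloor\ge k$ elements --- and they all go through, as does your dispatch of the degenerate regime $kt+1\ge n$.
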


\begin{proof}
Consider matrix $A^{(tk+1)\times (tk+t)}$ with its $i$th row having `1's from  entry $i$ to entry $i+t-1$ and `0's elsewhere for all $1 \leq i \leq tk+1$. Then the first $n$ columns of $A$ correspond to our measurement matrix. To prove the statement, we only need to show that $A$ can identify all $k$-sparse vectors in $\mathcal{R}^{tk+t}$, which happens if and only if
every non-zero vector $\bfz$ such that $A\bfz=\bm 0$ holds has at least $2k+1$ non-zero
elements \cite{CaT05}.


For each index $1\leq k' \leq k$, define a submatrix $A_{k'}$,
which consists of the first $tk'+1$ rows and the first $tk'+t$
columns of $A$.  We claim that every non-zero vector $\bfw$ such that $A_{k'}\bfw =\bm 0$ holds has at least $2k'+1$ non-zero elements with at least
two non-zero elements in the last $t$ entries.
We prove this claim by induction over $k'$.

First consider $A_1$. Note that its first row has `1's from column $1$ to $t$, and its last row has `1's from column $t+1$ to $2t$.
Because any two columns of
the submatrix $A_{1}$ are linearly independent, for any  $\bfw \neq \bm 0$ such that $A_1 \bfw= \bm 0$, 
$\bfw$ must have at least 
three non-zero elements. 
Let $j$ be the index of the last non-zero element of $\bfw$. If $j \leq t$, consider the $j$th row of $A_1$ with its first `1' entry in the $j$th column. The inner product of the $j$th row and $\bfw$ is non-zero,  contradicting the assumption that $A_1\bfw=\bm 0$. 
Then $j \geq t+1$ must hold. Then since the inner product between $\bfw$ and the last row of $A_1$ is zero, at least two non-zero elements exist in the last $t$ entries of $\bfw$.

Now suppose the claim holds for $A_{k'}$, 
consider a non-zero
vector $\w$ such that  $A_{k'+1}\bfw=\bm 0$ holds. Note that the vector of the
first $tk'+t$ positions of $\w$, denoted by $\hat{\bfw}$, satisfies $A_{k'}\hat{\bfw}=\bm 0$. 
We remark that $\hat{\bfw} \neq \bm 0$. 
If $\hat{\bfw} = \bm 0$, let $j$ denote the index of the first non-zero element of $\bfw$, and we have $j \geq tk'+t+1$. 
Consider the $(j+1-t)$th row of $A_{k'+1}$ with its last `1' entry in column $j$. 
Then the inner product of this row with $\w$ is non-zero, which is a contradiction.

Since $\hat{\bfw} \neq \bm 0$, 
from the induction assumption, it 
 has at least $2k'+1$ non-zero elements with at least two non-zero elements in its last $t$ elements. 
Now consider the last $2t$ elements of $\w$ and the last $t+1$ measurements in $A_{k'+1}$. From a similar argument for the case of $A_1$, we know that $\w$ 
must have at least two non-zero elements in the last $t$ positions. So $\w$  has at least $2(k'+1)+1$ non-zero elements.

By induction over $k'$, every $\bfw \neq \bm 0$ satisfying $A\bfw=0$ has at least $2k+1$ non-zero entries. 
This completes the proof.
\end{proof}

Theorem \ref{thm:linekcs} 
implies that we can save about $\lfloor \frac{n}{k+1}\rfloor$ measurements but
still be able to recover $k$-sparse vectors in a line/ring network
via compressed sensing. However, for group testing associated with a line/ring network, one can check  that $n$ measurements are necessary to recover more than one non-zero element. The key is that every node should be the \textit{endpoint} at least twice, where the endpoints are the nodes at the beginning and the end of a measurement. The endpoints of a measurement can be a same node. If node $u$ is an endpoint for at most once, then it is always measured together with one of its neighbors, say $v$, if ever measured. 
Then when $v$ is `1', we cannot determine the value of $u$, either '1' or '0'. 
Therefore, to recover more than one non-zero element, we need at least $2n$ endpoints, and thus $n$ measurements. 

\subsection{Ring with nodes connecting to four closest neighbors}\label{sec:ring4}
 We know from Section \ref{sec:line} that
$\lceil n/2\rceil$ measurements are necessary to recover even one non-zero element associated with a ring
network. Now consider a graph with each node directly connecting to its
four closest neighbors as in Fig. \ref{fig:ring4} (a), denoted by $\mathcal{G}^4$. $\mathcal{G}^4$ is important to the study of small-world networks \cite{WS98}.
$\mathcal{G}^4$ has $n$ more links than the ring network, but we will show
that 
the number of
measurements required by compressed sensing to recover $k$-sparse vectors associated with $\mathcal{G}^4$ significantly reduces from $\Theta(n)$ to $O(k \log(n/k))$. 
\begin{figure*}[ht]
\centering
\begin{tabular}{c c c c}
\includegraphics[width=0.2\linewidth]{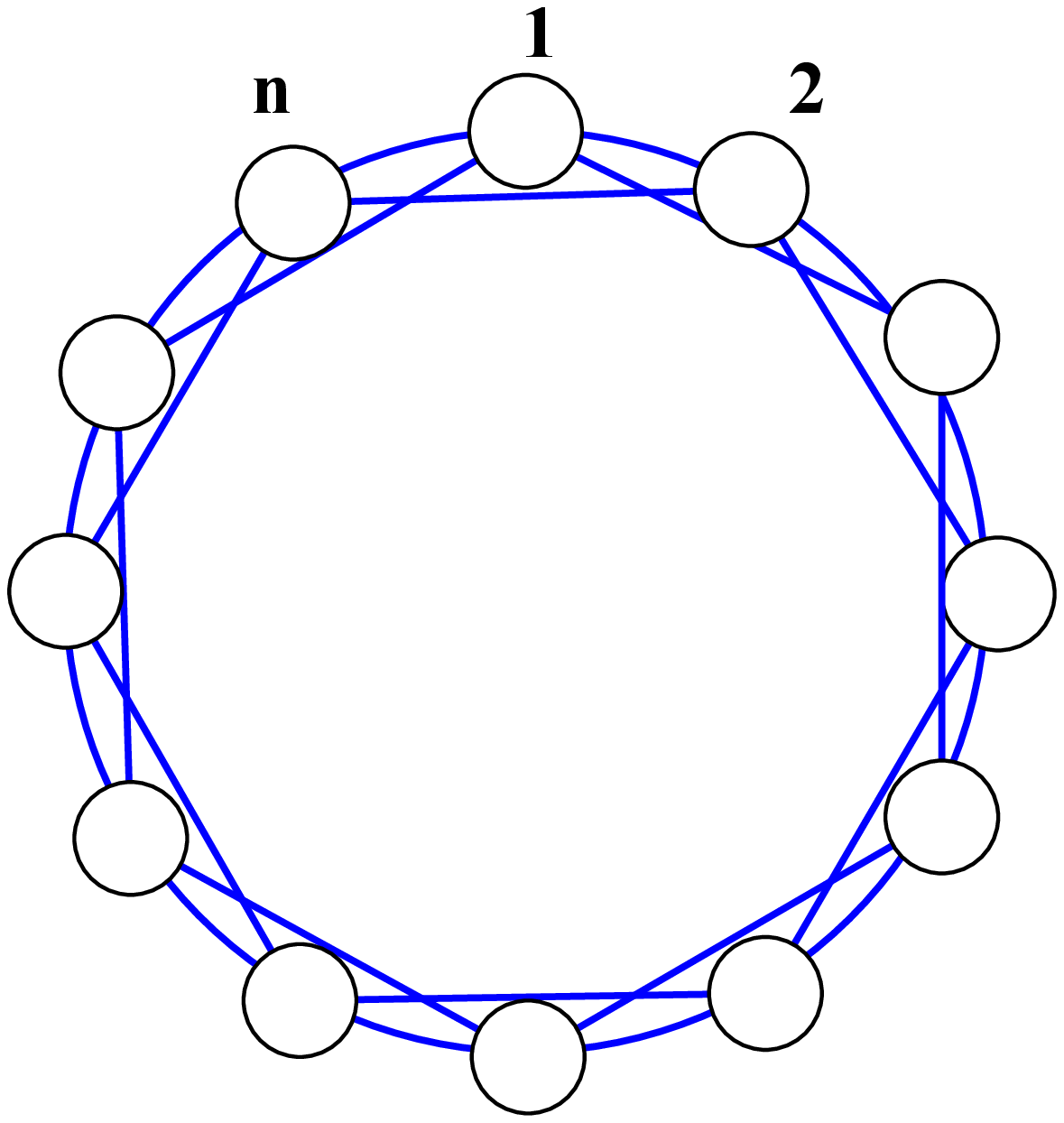}
&
\includegraphics[width=0.2\linewidth]{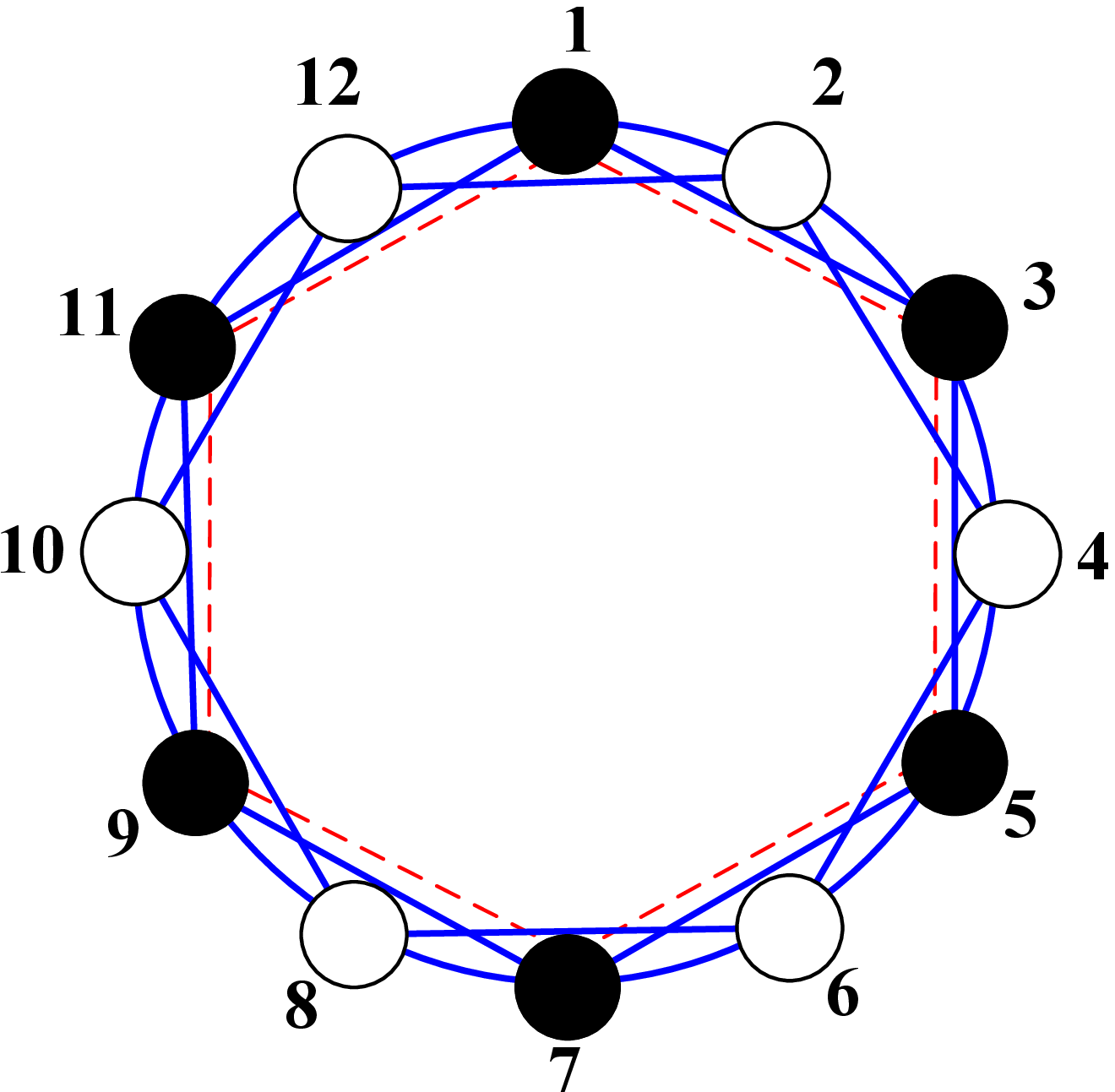}
&
\includegraphics[width=0.2\linewidth]{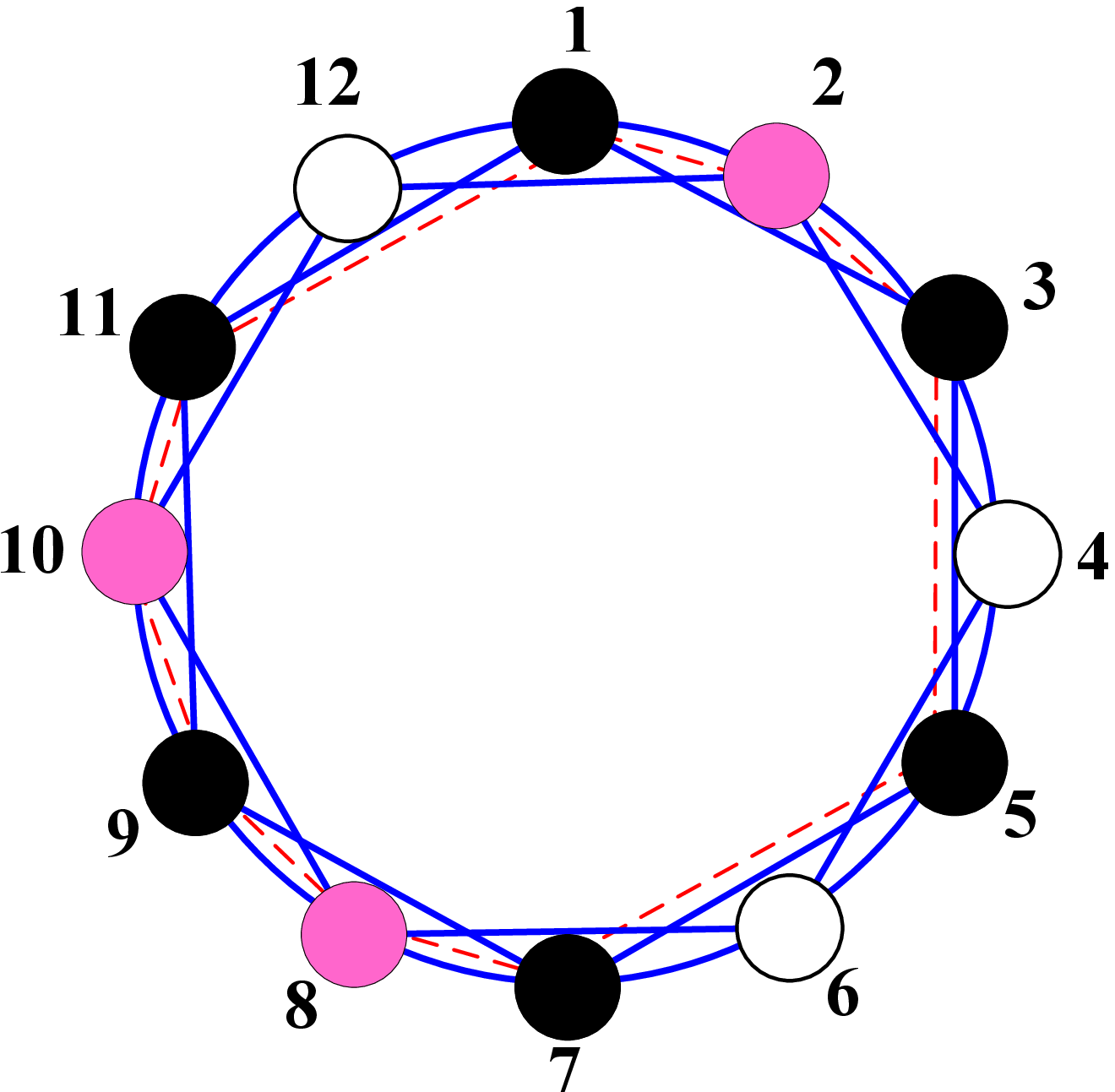}
&
\includegraphics[width=0.2\linewidth]{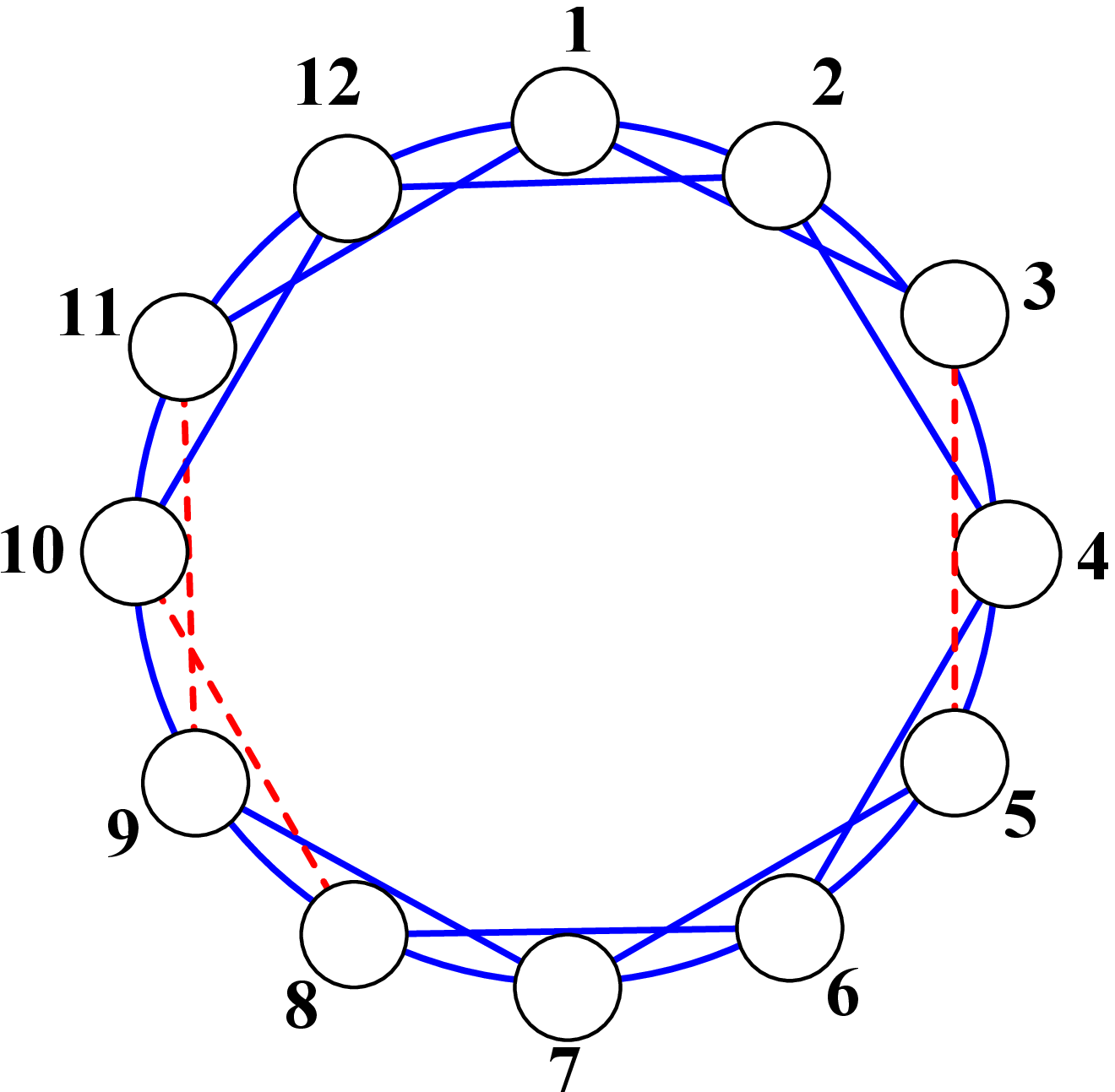}
\\
{\scriptsize (a) Topology of $\mathcal{G}^4$} & {\scriptsize (b) Odd nodes as a hub} & {\scriptsize (c) Measure nodes 2,8 and 10 via hub }& {\scriptsize (d) Delete $h$ long links} \\
\end{tabular}
      \caption{Sparse recovery on graph $\mathcal{G}^4$}
      \label{fig:ring4}
   \end{figure*}

Throughout the paper, given a graph $G=(V,E)$, we say  $S$ forms a \textbf{\textit{hub}} for $U$ if $G_S$ is connected, and for every $u$ in $U$, there exists $s$ in $S$ such that $(u,s) \in E$. Clearly the set of all the odd nodes, denoted by $T_{\textrm{o}}$, form a hub for the set of all the even nodes, denoted by $T_{\textrm{e}}$. Given a $k$-sparse vector $\bfx$, let $\bfx_{\textrm{o}}$ and $\bfx_{\textrm{e}}$ denote the subvectors of $\bfx$ with odd and even indices. Then $\bfx_{\textrm{o}}$ and $\bfx_{\textrm{e}}$ are at most $k$-sparse. The sum of entries in $\bfx_{\textrm{o}}$, denoted by $s_{\textrm{o}}$, can be obtained by one measurement, and similarly for the sum $s_{\textrm{e}}$ of the entries of $\bfx_{\textrm{e}}$. For any subset $W$ of $T_{\textrm{e}}$, $T_{\textrm{o}} \cup W$ induces a connected subgraph and thus can be measured by one measurement. We can obtain the sum of values corresponding to  nodes in $W$ by measuring nodes in $T_{\textrm{o}} \cup W$ and then subtracting $s_{\textrm{o}}$ from the sum.  For example in Fig. \ref{fig:ring4} (b) and (c), in order to measure the sum of the pink nodes 2, 8 and 10, we  measure the sum of pink nodes and all the black odd nodes, and then subtract $s_{\textrm{o}}$ from the obtained summation. Though the subgraph induced by $T_{\textrm{e}}$ are not complete, we can indeed freely measure nodes in $T_{\textrm{e}}$ with the help of the hub $T_{\textrm{o}}$. Therefore $M^C_{k, \lfloor n/2 \rfloor }+1$ measurements are enough to recover $\bfx_{\textrm{e}}\in \mathcal{R}^{\lfloor n/2 \rfloor}$, where the additional one measurement measures $s_{\textrm{o}}$. 
Similarly, we can use $T_{\textrm{e}}$ as a hub to recover the subvector $\bfx_{\textrm{o}} \in \mathcal{R}^{\lceil n/2 \rceil}$ with $M^C_{k, \lceil n/2 \rceil }+1$ measurements, and thus $\bfx$ is recovered. From above, we have
\begin{theorem}\label{thm:ring4}
 All $k$-sparse vectors associated with $\mathcal{G}^4$ can be recovered with $M^C_{k, \lfloor n/2 \rfloor}+M^C_{k, \lceil n/2 \rceil}+2$ measurements, which is $O( 2k\log (n/(2k)))+2$. 
 \end{theorem}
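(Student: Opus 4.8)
The plan is to decouple the recovery of the odd-indexed and even-indexed entries of $\bfx$, using the extra links of $\mathcal{G}^4$ to simulate a complete graph on each half of the nodes. First I would split $\bfx$ into the subvector $\bfx_{\textrm{o}} \in \mathcal{R}^{\lceil n/2\rceil}$ of odd-indexed entries and the subvector $\bfx_{\textrm{e}} \in \mathcal{R}^{\lfloor n/2\rfloor}$ of even-indexed entries; since $\|\bfx\|_0 \le k$, each of $\bfx_{\textrm{o}}$ and $\bfx_{\textrm{e}}$ is at most $k$-sparse, and recovering $\bfx$ is equivalent to recovering both of them. So it suffices to design measurements (each of which must induce a connected subgraph of $\mathcal{G}^4$) that identify $\bfx_{\textrm{e}}$ from $\bfx_{\textrm{e}}$-measurements and likewise for $\bfx_{\textrm{o}}$.

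Next I would verify the hub property. Because in $\mathcal{G}^4$ node $i$ is adjacent to $i\pm 1$ and $i\pm 2$, consecutive odd nodes are two apart and hence adjacent (with the ring wrap-around handled the same way), so $T_{\textrm{o}}$ induces a connected subgraph, and every even node has an odd neighbor. Consequently, for any $W\subseteq T_{\textrm{e}}$, the subgraph induced by $T_{\textrm{o}}\cup W$ is connected and may be taken as a single measurement under (A1); by (A2) it returns $s_{\textrm{o}}+\sum_{j\in W}x_j$, where $s_{\textrm{o}}$ is obtained once by measuring $T_{\textrm{o}}$ alone. Subtracting $s_{\textrm{o}}$ realizes an arbitrary $0$-$1$ linear functional of $\bfx_{\textrm{e}}$. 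Since in a complete graph every $0$-$1$ matrix is admissible, I can fix an $M^C_{k,\lfloor n/2\rfloor}\times \lfloor n/2\rfloor$ matrix that identifies all $k$-sparse vectors, emulate each of its rows through the hub $T_{\textrm{o}}$, and adjoin the single measurement of $s_{\textrm{o}}$; this recovers $\bfx_{\textrm{e}}$ with $M^C_{k,\lfloor n/2\rfloor}+1$ measurements. The symmetric argument with $T_{\textrm{e}}$ as a hub for $T_{\textrm{o}}$ recovers $\bfx_{\textrm{o}}$ with $M^C_{k,\lceil n/2\rceil}+1$ measurements, for a total of $M^C_{k,\lfloor n/2\rfloor}+M^C_{k,\lceil n/2\rceil}+2$.

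Finally I would plug $m=\lfloor n/2\rfloor$ and $m=\lceil n/2\rceil$ into the bound $M^C_{k,m}=O(k\log(m/k))$ from~\eqref{eqn:MC}; each term is $O(k\log(n/(2k)))$, so the total is $O(2k\log(n/(2k)))+2$. The only delicate point is the connectivity bookkeeping in the second step — checking the ring wrap-around so that $T_{\textrm{o}}$ and each $T_{\textrm{o}}\cup W$ are genuinely connected in $\mathcal{G}^4$ — together with the observation that subtracting $s_{\textrm{o}}$ is legitimate precisely because the measurement model (A2) is additive; once the ``virtual complete graph'' on each half of the nodes is in place, the counting is immediate.
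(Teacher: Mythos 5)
Your proposal is correct and follows essentially the same route as the paper: split $\bfx$ into the odd- and even-indexed subvectors, use $T_{\textrm{o}}$ as a hub to emulate arbitrary $0$-$1$ measurements on $T_{\textrm{e}}$ (at the cost of one extra measurement for $s_{\textrm{o}}$), and symmetrically for $T_{\textrm{e}}$. Your explicit check that consecutive odd nodes are adjacent via the distance-$2$ links (so the hub is connected) is a detail the paper asserts without elaboration, but the argument is the same.
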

Theorem \ref{thm:ring4} is important in the following three aspects.

Firstly, from ring network to $\mathcal{G}^4$, although the number of links only increases by $n$, the number of measurements required to recover $k$-sparse vectors significantly reduces from $\Theta(n)$ to $O( 2k\log (n/(2k)))+2$. Besides, this value 
is in the same order as $M^C_{k,n}$, while 
the number of links in $\mathcal{G}^4$ is only $2n$ compared with  $n(n-1)/2$ links in a complete graph. 

Secondly, the idea of using a hub to design the measurements is  very important for our later results. If set $S$  can serve as a hub for $U$ in graph $G$, then the induced graph $G_U$ is ``almost equivalent'' to a complete subgraph in the sense that we can measure any subset of nodes in $U$ freely via $S$. The number of measurements required to recover $k$-sparse vectors associated with $U$ is $M^C_{k, |U|}+1$ with one additional one measurement for the hub.

Thirdly, our estimate $O( 2k\log (n/(2k)))+2$ on the minimum number of measurements required to recover $k$-sparse vectors greatly improves over the existing results in \cite{CKMS10,XMT11}, both of which are based on the mixing time of a random walk.
The 
mixing time $T(n)$ is  the smallest $t'$ such that
a random walk of length $t'$ starting at any node in $G$ ends up
having a distribution $\mu'$ with $\|\mu-\mu'\|_{\infty} \leq 1/(2cn)^2$ 
for some $c\geq 1$, where $\mu$ is the stationary distribution over the nodes of a standard random walk
over the graph $G$. 
\cite{XMT11} proves that $O( k T^2(n) \log n)$ measurements 
can identify $k$-sparse vectors with overwhelming probability by compressed sensing.  \cite{CKMS10} uses $O(k^2 T^2(n) \log (n/k))$  measurements to identify $k$ non-zero elements by group testing. In $\mathcal{G}^4$, one can easily see that 
$T(n)$ should be at least $n/4$. Then both results provide no saving in the number of measurements for $\mathcal{G}^4$ as the mixing time is $\Theta(n)$. 

Besides the explicit measurement construction described before Theorem \ref{thm:ring4},  
we can also recover $k$-sparse vectors with $O( \log n)$ random measurements with high
probability. 
We need to point out that 
these random measurements do not depend on the measurements of a complete graph.

Consider an $n$-step Markov chain $\{ X_k, 1 \leq k \leq n\}$ with $X_1=1$. For any $k\leq n-1$, if $X_k=0$, then $X_{k+1}=1$; if $X_k=1$, then $X_{k+1}$ can be 0 or 1 with equal probability.
Clearly any realization of this Markov chain does not contain two or more consecutive
zeros, and thus is a feasible row of the measurement matrix. 
Moreover, 
%
\begin{theorem}\label{thm:random}
With high probability all $k$-sparse vectors associated with $\mathcal{G}^4$ can be recovered with $O(g(k) \log n)$
measurements obtained from the above Markov chain, where $g(k)$ is a function of
$k$.
\end{theorem}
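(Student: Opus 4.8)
The plan is to show that a random measurement matrix whose rows are independent realizations of the described Markov chain behaves, for sparse recovery purposes, essentially like a dense random $0$-$1$ matrix on $n$ coordinates, so that the classical compressed sensing count $O(k\log(n/k))$ applies up to a constant factor depending only on $k$. Concretely, I would first observe that in a single realization $X_1,\dots,X_n$, each individual entry $X_j$ has $\Pr[X_j=1]$ bounded below by a fixed constant (one can show $\Pr[X_j=1]\ge 1/3$ for all $j$, since the chain spends a constant fraction of its time in state $1$), and more importantly, for any fixed set of coordinates the pattern is ``spread out'' enough that a constant fraction of them are hit. The key structural fact to extract is a restricted-isometry-type or, more elementarily, a unique-identifiability property: following the criterion used in Theorem~\ref{thm:linekcs} (via \cite{CaT05}), it suffices that with high probability every $2k$ columns of the sampled matrix are linearly independent, equivalently that $A\bfz=\bm 0$ forces $\|\bfz\|_0\ge 2k+1$.

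The main steps, in order, would be: (i) Fix a support set $T$ with $|T|\le 2k$ and analyze the probability that the $|T|$ columns of a single Markov-chain row, restricted to $T$, are ``informative'' — i.e. that the row, viewed as a $\pm$ test, separates $T$ into nonempty pieces; here I would use the fact that consecutive coordinates of $T$ that are far apart in $\{1,\dots,n\}$ are nearly independent in the chain, while adjacent ones still have $\Omega(1)$ probability of differing. (ii) Take $m = O(g(k)\log n)$ independent rows and apply a union bound over all $\binom{n}{2k}$ candidate support sets: since $\log\binom{n}{2k} = O(k\log(n/k)) = O(g(k)\log n)$, choosing the constant in $g(k)$ large enough (e.g. $g(k)$ a suitable polynomial in $k$) drives the failure probability to $o(1)$. (iii) Conclude via the $\ell_0$ null-space condition that $A$ identifies all $k$-sparse vectors, and note that, since every realization of the chain avoids two consecutive zeros, every such row is a legal measurement on $\mathcal{G}^4$ (where only nodes inducing a connected subgraph — here, runs of consecutive nodes possibly skipping single gaps bridged by the two-hop links — may be combined), so the construction is feasible.

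The hard part will be step (i): the rows are not i.i.d.\ across coordinates, so I cannot directly quote standard RIP results for Bernoulli matrices. I would need a concentration argument tailored to the Markov dependence — either a coupling showing the restriction to any $2k$ coordinates is close in total variation to a product of (biased) Bernoullis, or a direct martingale/Azuma bound on the number of sign changes the row induces on a fixed support. This is also where the dependence of $g(k)$ on $k$ enters: the ``effective'' number of independent bits a single row contributes about a $2k$-subset scales down with $k$ (roughly like $1/k$ or $1/k^2$ in the worst configuration of the support), forcing $m$ to absorb a compensating factor $g(k)$. A secondary subtlety is handling support sets that are ``clustered'' — many consecutive indices — since there the chain's memory is most restrictive; but because a run of length $\ell$ still has $\Omega(1)$ probability of containing a $1$–$0$ boundary at any prescribed position, each clustered block still yields a constant-probability informative event, and independence across the $m$ rows recovers the logarithmic savings.
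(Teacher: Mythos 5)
Your overall skeleton matches the paper's: reduce to the null-space/$\ell_0$ condition that every $2k$ columns of $A$ are linearly independent, bound the failure probability for a fixed $2k$-subset of columns, and union-bound over the $\binom{n}{2k}$ subsets with $m=O(g(k)\log n)$ rows. The problem is that the step you yourself flag as ``the hard part'' is exactly where the proof lives, and the notion you propose to carry it --- a single row being ``informative'' in the sense of separating the support $T$ into nonempty pieces --- is the group-testing notion of separation and does not yield real linear independence. Even if every row separates $T$, the $m\times 2k$ submatrix can be rank-deficient (all rows could realize the same separating pattern). What is needed is that some collection of $2k$ rows, restricted to the $2k$ columns, forms an \emph{invertible} matrix, and your proposal gives no mechanism for producing one; the coupling/Azuma/RIP machinery you gesture at is both unnecessary and not obviously sufficient for that purpose.

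The paper closes this gap with a direct and elementary device: fix the specific invertible $2k\times 2k$ matrix $B$ that is all ones except for zeros in positions $B_{ii}$, $i=2,\dots,2k$, partition the $m$ rows into $\lfloor m/(2k)\rfloor$ blocks of $2k$ rows, and lower-bound the probability that a given block, restricted to the columns $I$, is a row permutation of $B$. Each required row pattern has at most one zero among the $2k$ positions of $I$, hence never forces two consecutive zeros and is realizable by the chain; using $P(X_k=1\mid X_i=1)\ge 1/2$, $P(X_k=0\mid X_i=1)\ge 1/4$, $P(X_k=1\mid X_i=0)\ge 1/2$, each block succeeds with probability at least $(2k)!\,2^{-(4k^2+2k-1)}$, a constant in $n$. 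Failure then decays like $(1-c_k)^{\lfloor m/(2k)\rfloor}$, which beats $\binom{n}{2k}$ once $m=g(k)\log n$. Note also that this forces $g(k)=(2k+1)2^{4k^2+2k-1}/(2k-1)!$, i.e.\ exponential in $k^2$, not the polynomial-in-$k$ or $1/k^2$-type dependence you speculate about. To repair your proposal you would need to replace step (i) with an argument of this targeted-invertible-pattern type (or some other explicit spanning argument); as written, the proof does not go through.
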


\begin{proof}
See Appendix.
\end{proof}

Adding $n$ links in the
form $(i, i+2 (\textrm{mod } n))$ to the ring network greatly
reduces the number of measurements needed from $\Theta(n)$ to $O(\log n)$. 
Then how many links in the form $(i, i+2(\textrm{mod } n))$ shall we add to the ring network such that the minimum number of measurements required to recover $k$-sparse vectors is exactly $\Theta (\log n)$? The answer is $n- \Theta(\log n)$.
To see this, let $\mathcal{G}^4_h$  denote the graph obtained by deleting $h$ links in the form $(i, i+2 (\textrm{mod } n))$ from
$\mathcal{G}^4$. For example in Fig. \ref{fig:ring4} (d), we delete links $(3,5)$, $(8,10)$ and $(9,11)$ in red dashed lines from $\mathcal{G}^4$. Given $h$, 
our following results do not depend on the specific choice of links to remove.  We have

\begin{theorem}\label{thm:ring4h}
 The minimum number of measurements required to recover $k$-sparse vectors associated with $\mathcal{G}^4_h$ is lower bounded by $\lceil h/2 \rceil$, and upper bounded by $2M^C_{k,\lceil \frac{n}{2}\rceil}+h+2$.
 \end{theorem}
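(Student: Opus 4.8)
The plan is to prove the two bounds by separate arguments: the upper bound refines the hub construction behind Theorem~\ref{thm:ring4} by ``repairing'' a broken hub with a few extra nodes, while the lower bound reduces to $1$-sparse recovery and counts how many neighbouring node pairs a single connected measurement can separate, recovering the ring bound $\lceil n/2\rceil$ as the special case $h=n$.

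\emph{Upper bound.} Write $h=h_{\textrm{o}}+h_{\textrm{e}}$, where $h_{\textrm{o}}$ (resp.\ $h_{\textrm{e}}$) counts the deleted chords whose two endpoints are both odd (resp.\ both even). In $\mathcal{G}^4$ the odd nodes $T_{\textrm{o}}$ form a cycle via the odd chords, so deleting $h_{\textrm{o}}$ of them splits $T_{\textrm{o}}$ into at most $h_{\textrm{o}}$ paths; for each deleted odd chord $(i,i+2)$ the even node $i+1$ remains adjacent, through the ring edges $(i,i+1)$ and $(i+1,i+2)$ that are never deleted, to both $i$ and $i+2$. Hence a set $B$ of at most $h_{\textrm{o}}$ such ``bridge'' even nodes makes $T_{\textrm{o}}\cup B$ connected, and since every node of $T_{\textrm{e}}\setminus B$ is adjacent to some odd node, $T_{\textrm{o}}\cup B$ is a hub for $T_{\textrm{e}}\setminus B$. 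By the hub principle discussed after Theorem~\ref{thm:ring4}, $M^C_{k,|T_{\textrm{e}}\setminus B|}+1\le M^C_{k,\lceil n/2\rceil}+1$ measurements recover $\bfx$ on $T_{\textrm{e}}\setminus B$. Symmetrically, a set $B'$ of at most $h_{\textrm{e}}$ odd bridge nodes makes $T_{\textrm{e}}\cup B'$ a hub for $T_{\textrm{o}}\setminus B'$, using another $M^C_{k,\lceil n/2\rceil}+1$ measurements. Finally, each of the at most $|B|+|B'|\le h$ still-unknown nodes forms a connected single-node subgraph, so one extra measurement apiece recovers it. Summing gives the claimed $2M^C_{k,\lceil n/2\rceil}+h+2$.

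\emph{Lower bound.} Recovering $k$-sparse vectors is at least as hard as recovering $1$-sparse ones, so it suffices to bound the $1$-sparse case, for which a feasible matrix must have pairwise distinct nonzero columns; equivalently, for every ring edge $(i,i+1)$ some measurement must contain exactly one of $i$ and $i+1$. I would show that deleting $h$ chords creates a collection of such edges whose separating subgraphs are forced to be ``localized'': deleting the chords $(i-1,i+1)$ and $(i,i+2)$ pins any connected set that separates $i$ from $i+1$ to lie wholly on one side of that gap, and since the deleted chords cut the ring into arcs, a single connected measurement can be localized at only a bounded number of gaps. A charging argument over these gaps then forces at least $\lceil h/2\rceil$ measurements, consistent with the known value $\lceil n/2\rceil$ at $h=n$.

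\emph{Main difficulty.} The upper bound is bookkeeping once the hub-repair idea is in hand. The real work is the lower bound: the charging argument must hold for an \emph{arbitrary} choice of deleted chords --- in particular for ``spread-out'' deletions in which no two deleted chords are adjacent --- which calls for a careful characterization of when a connected induced subgraph of $\mathcal{G}^4_h$ is forced to be localized at a gap, together with a tight count of how many gaps one subgraph can resolve. Pinning down the additive constants (the two hub-sum measurements, the $h$ single-node measurements, and the bound $|B|+|B'|\le h$) is then routine.
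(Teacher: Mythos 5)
Your upper bound is correct and is essentially the paper's own construction: your bridge sets $B$ and $B'$ are together exactly the paper's set $D$ of midpoints $i$ of the deleted chords $(i-1,i+1)$ (every chord joins two nodes of the same parity, so $B\cup B'=D$ and $|D|=h$), and both arguments spend $h$ direct measurements on these midpoints plus two hub-based recoveries of the remaining even and odd nodes, giving $2M^C_{k,\lceil \frac{n}{2}\rceil}+h+2$.

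The lower bound, however, is a genuine gap. You only outline a charging argument, and the mechanism you propose --- a connected set being ``pinned'' by a gap formed by \emph{two adjacent} deleted chords $(i-1,i+1)$ and $(i,i+2)$ --- does not apply when the deleted chords are spread out, which you acknowledge is precisely the unresolved case; since the theorem must hold for an arbitrary choice of deletions, the bound is not established. The argument the paper leans on (following Theorem 2 of \cite{TWHR11}) is different and needs no adjacent deletions. Let $D$ be the set of the $h$ midpoints. For any connected induced subgraph $S$ of $\mathcal{G}^4_h$, the complement of $S$ along the underlying ring splits $S$ into maximal arcs separated by gaps, and the only edges of $\mathcal{G}^4_h$ between two distinct arcs of $S$ are chords jumping over a gap consisting of a \emph{single} node $j$ with $j\notin D$ (a chord spans only one skipped node, and the chord over a node of $D$ is exactly the one that was deleted). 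Connectivity of $S$ therefore forces all but at most one gap to be such a singleton, so $D\setminus S$ lies in a single gap and $S\cap D$ is circularly contiguous in $D$. Hence, restricted to $1$-sparse vectors supported on $D$, every feasible measurement acts as an arc of a ring on $h$ nodes, and the known $\lceil h/2\rceil$ lower bound for $1$-sparse recovery on a ring applies. Some structural lemma of this kind is the missing ingredient in your plan.
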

%
%
\begin{proof}
Let $D$ denote the set of nodes such that for every $i \in D$, link $(i-1, i+1)$
is removed from $\mathcal{G}^4$.
The proof of the lower bound follows the proof of Theorem 2 in \cite{TWHR11}.
The key idea is that recovering one non-zero element in $D$ is equivalent to recovering one non-zero element in a ring network with $h$ nodes, and thus $\lceil h/2 \rceil$ measurements are necessary.

For the upper bound, we 
first measure nodes in $D$ separately with $h$ measurements.  Let $S$ contain the even nodes in $D$ and all the odd nodes. 
$S$ can be used as a hub to recover the $k$-sparse subvectors associated with the even nodes that are not in $D$, and the number of measurements used is at most $M^C_{k,\lfloor \frac{n}{2}\rfloor}+1$. We similarly recover $k$-sparse subvectors associated with odd nodes that are not in $D$ using the set of the odd nodes in $D$ and all the even nodes as a hub.  The number of measurements is at most $M^C_{k,\lceil \frac{n}{2}\rceil}+1$. Sum them up and the upper bound follows.
%
%
\end{proof}

Together with (\ref{eqn:MC}), Theorem \ref{thm:ring4h} directly implies that
 if $\Theta(\log n)$ links in the form $(i, i+2 (\textrm{mod } n))$ are deleted from $\mathcal{G}^4$, then $\Theta( \log n)$ measurements are both necessary and sufficient to recover $k$-sparse vectors associated with $\mathcal{G}^4_{\Theta(\log n)}$ for any constant $k$. Moreover, the lower bound in Theorem \ref{thm:ring4h} implies that if the number of links removed is $\Omega( \log n)$, then the number of measurements required for sparse recovery is also $\Omega( \log n)$. Thus, we need to add $n-\Theta(\log n)$ links to a ring network such that the number of measurements required for sparse recovery is exactly $\Theta(\log n)$.

Since the number of measurements required by compressed sensing is greatly reduced when we add $n$ links to the ring network, one may wonder whether
the number of measurements needed to locate $k$ non-zero elements by group
testing can also be greatly reduced or not. Our next result shows that
this is not the case for group testing.

\begin{prop}\label{thm:ring4gt}
$\lfloor n/4\rfloor$ measurements are necessary to locate two non-zero elements associated with $\mathcal{G}^4$ by group testing.
\end{prop}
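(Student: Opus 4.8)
The plan is to adapt the lower‑bound argument for line/ring networks under group testing that was sketched informally after Theorem~\ref{thm:linekcs}, namely the ``endpoint counting'' argument. Recall that in a group‑testing measurement over a connected induced subgraph of $\mathcal{G}^4$, the support of the measurement row is a connected vertex set $S$; in $\mathcal{G}^4$ the relevant boundary structure is captured by which nodes lie at the ``ends'' of $S$ with respect to the underlying ring order. The key claim I would prove is: if some node $u$ is never an endpoint (in the appropriate sense) of any of the $m$ measurements, then one cannot distinguish the $1$‑sparse vector supported on $u$ from the $2$‑sparse vector supported on a suitable neighbor pair — more precisely, one can find a value assignment making $u$ indistinguishable from $0$ whenever one of its close neighbors is nonzero. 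Hence every node must be an endpoint a sufficient number of times, and a counting bound on the total number of endpoints over $m$ measurements yields $m \ge \lfloor n/4\rfloor$.

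The first step is to set up the right notion of ``endpoint'' for $\mathcal{G}^4$. Label the nodes $1,\dots,n$ around the ring. A connected induced subgraph of $\mathcal{G}^4$ need not be an interval, because of the chords $(i,i+2)$: a set like $\{1,3,5,\dots\}$ is connected. I would argue that nonetheless the support $S$ of any measurement, viewed on the ring, consists of a union of arcs with only ``short gaps'' (gaps of size at most one, since two consecutive missing nodes would disconnect $S$), so $S$ is contained in at most two ring‑arcs, and on the complement it is a single arc. The boundary of $S$ then involves only a bounded number ($O(1)$, in fact at most $4$) of ``extreme'' nodes near the two ends of that arc. I would define these $O(1)$ nodes per measurement as the endpoints; this is the technical heart and the place where $\mathcal{G}^4$ differs from the ring, and I expect it to be the main obstacle — one must be careful that a node on the ``boundary'' of $S$ but reachable through a chord still does not let us isolate it.

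The second step is the indistinguishability lemma. Suppose node $u$ is an endpoint of at most one measurement. Then in every measurement whose support contains $u$ except possibly one, $u$ is ``interior'': both of $u$'s ring‑order sides within the support contain another node within distance two of $u$. Using this, I would exhibit a neighbor $v$ of $u$ (at ring distance $1$ or $2$) such that the set of measurements containing $u$ is ``dominated'' by the set containing $v$ together with the one exceptional measurement — so that the boolean OR pattern produced by putting a $1$ at $u$ alone is reproduced by putting $1$'s at $v$ and at the partner of the exceptional measurement. This is exactly the group‑testing obstruction: two distinct (sparse) supports with the same measurement outcome.

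The final step is the count. By the lemma, every node must be an endpoint of at least two measurements to allow recovery of up to two nonzero elements; since each measurement contributes at most $4$ endpoints (two ends, each involving at most two extreme nodes in $\mathcal{G}^4$), we get $4m \ge 2n$, i.e. $m \ge \lceil n/2 \rceil$ in the crudest form; sharpening the per‑measurement endpoint bound — an interval in $\mathcal{G}^4$ really has only two genuinely ``exposed'' nodes at each end for the purpose of the argument but the adversary can only exploit a pair — brings the constant down to $\lfloor n/4 \rfloor$. Concretely I would partition the $n$ nodes into $\lfloor n/4\rfloor$ blocks of four consecutive nodes and show each block forces at least one measurement dedicated to ``exposing'' a node in it, mirroring the $\lceil h/2\rceil$ bound in the proof of Theorem~\ref{thm:ring4h} (where deleting chords reduces $\mathcal{G}^4$ locally to a ring). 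I would in fact reduce directly to that ring bound: restricting attention to, say, the even‑indexed nodes and the measurements' traces on them, recovering two nonzero even‑indexed entries is as hard as recovery on a ring of $\lfloor n/2\rfloor$ nodes under group testing, which needs $\lfloor n/2\rfloor$ endpoints among those nodes, hence $\lfloor n/4 \rfloor$ measurements. The delicate point — again the main obstacle — is justifying that chords incident to an even node through an odd node do not help the group‑testing decoder isolate that even node, which is where the monotone (OR) nature of the measurements, as opposed to additive cancellation, is essential.
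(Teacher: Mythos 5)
Your plan has a genuine gap, and it sits exactly where you flag ``the main obstacle.'' The concrete reduction you propose at the end --- restrict to even-indexed nodes and argue that recovering two nonzero even entries is as hard as group testing on a ring of $\lfloor n/2\rfloor$ nodes --- is false. The odd nodes form a hub for the even nodes: for any subset $W$ of even nodes, $W$ together with all odd nodes induces a connected subgraph of $\mathcal{G}^4$, and for a vector supported only on even nodes this test returns $1$ exactly when $W$ contains a defective. Hence the decoder can realize \emph{arbitrary} subsets of the even sub-ring as tests, and two defectives among the even nodes can be located with $O(\log^2 n)$ measurements; no $\Omega(n)$ bound can come from that restriction. (Relatedly, your structural claim that a connected $S$ is contained in at most two ring-arcs is wrong --- the set of all odd nodes is connected --- so the trace of a measurement on the even sub-ring need not be an interval, which is another way to see the reduction break.) Your alternative endpoint-counting route is left as a sketch whose two key ingredients, the definition of ``endpoint'' in $\mathcal{G}^4$ and the indistinguishability lemma, are precisely the parts you defer; moreover the crude count you state would give $m\ge n/2$ rather than $\lfloor n/4\rfloor$, and ``sharpening'' a bound does not make it smaller, so the plan is not internally consistent as written.

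The paper's proof uses a different and cleaner restriction: consider only $2$-sparse vectors supported on an \emph{adjacent odd--even pair} $B_i=\{2i-1,2i\}$. Because the pair straddles both parities, no hub trick can separate its two members: any measurement whose support touches $B_i$ at all returns $1$. Collapsing each $B_i$ to a super-node yields a ring of $\lfloor n/2\rfloor$ blocks (each $B_i$ is adjacent in $\mathcal{G}^4$ only to $B_{i\pm1}$, since both ring edges and chords out of $B_i$ land in $B_{i-1}\cup B_{i+1}$), and connectivity forces each measurement's trace on the blocks to be an interval of that ring. Identifying which single block is ``on'' is then $1$-sparse recovery on a ring of $\lfloor n/2\rfloor$ nodes, which by the line/ring result needs $\lfloor n/4\rfloor$ measurements. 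If you want to salvage your approach, replace ``even-indexed nodes'' by these odd--even blocks; that is the step your proposal is missing.
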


\begin{proof}
Suppose two non-zero elements are on nodes
$2i-1$ and $2i$ for some $1\leq i \leq \lfloor \frac{n}{2}\rfloor$.
We view nodes $2i-1$ and $2i$ as a group for every $i$ ($1
\leq i \leq \lfloor \frac{n}{2}\rfloor$), denoted by $B_i$. 
If both nodes in $B_j$ are `1's for some $j$, then every measurement that
passes either node or both nodes in $B_i$ is always `1'.  
Consider a reduced graph with
$B_i$, $\forall i$ 
as nodes, and
   link $(B_i,B_j)$ ($i \neq j$) exists only if in $\mathcal{G}^4$ there is a path from a node in $B_i$ to a node in $B_j$
without going though any other node not in $B_i$ or $B_j$. 
The reduced network is a ring with $\lfloor \frac{n}{2}\rfloor$ nodes, 
and thus $\lfloor n/4 \rfloor$ measurements are required to locate one
non-zero element in the reduced network. 
Then the lower bound follows.
\end{proof}

By Theorem \ref{thm:ring4} and Proposition \ref{thm:ring4gt}, we
observe that in $\mathcal{G}^4$, with compressed sensing the number
of measurements needed to recover $k$-sparse vectors is $O(2k \log (n/(2k)))$,
while with group testing, $\Theta(n)$ measurements are required if
$k \geq 2$. 

\subsection{Two-dimensional grid}\label{sec:2d}
Next we consider the two-dimensional grid, denoted by $\mathcal{G}^{2d}$. 
$\mathcal{G}^{2d}$ has $\sqrt{n}$ rows and $\sqrt{n}$ columns. From now on we skip `$\lceil\cdot \rceil$' and `$\lfloor\cdot \rfloor$' for notational simplicity, but note that the number of nodes should always be an integer.

\begin{figure}
\centering
\begin{tabular}{c c}
\includegraphics[width=0.35\linewidth]{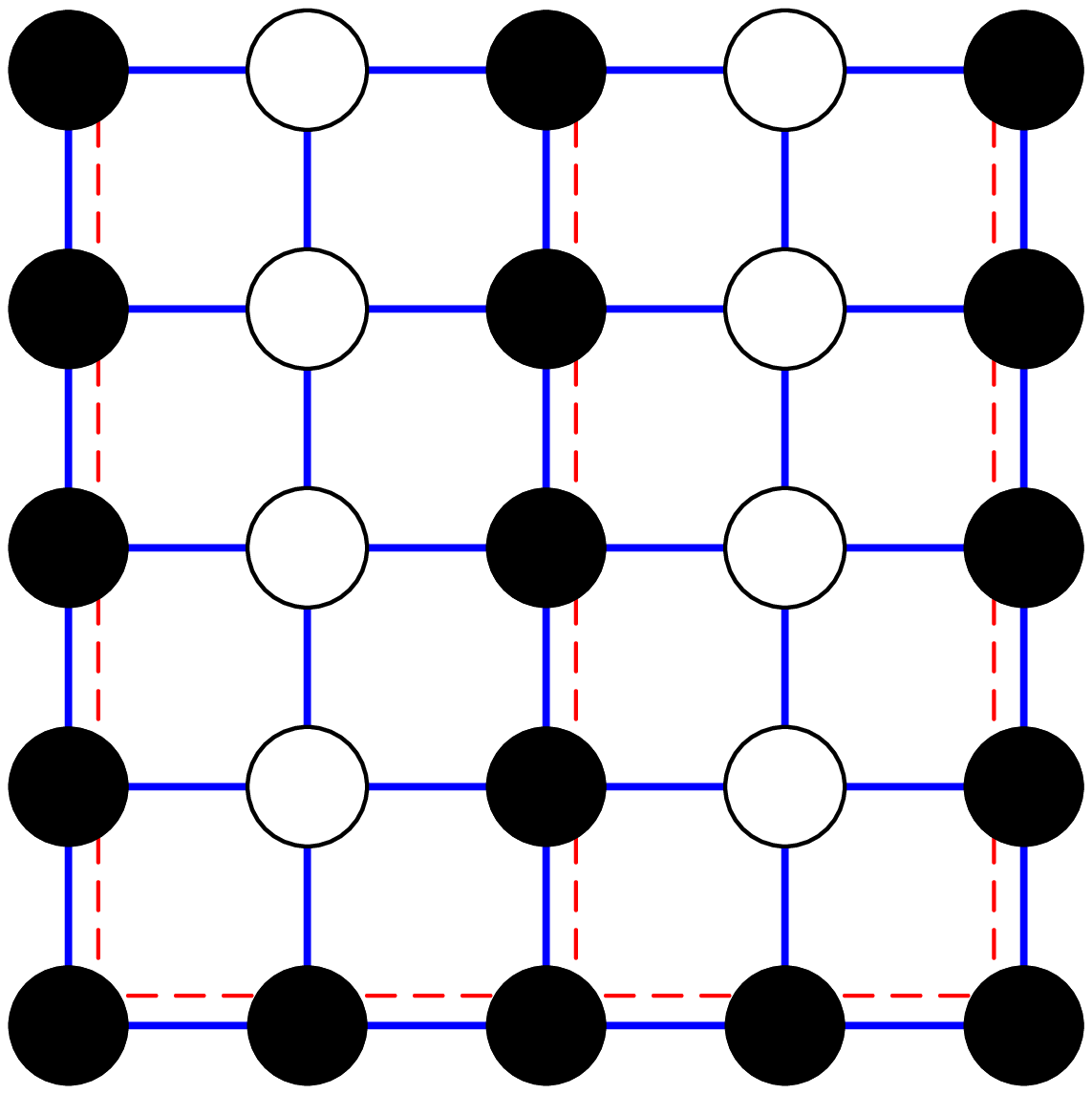}
&
\includegraphics[width=0.35\linewidth]{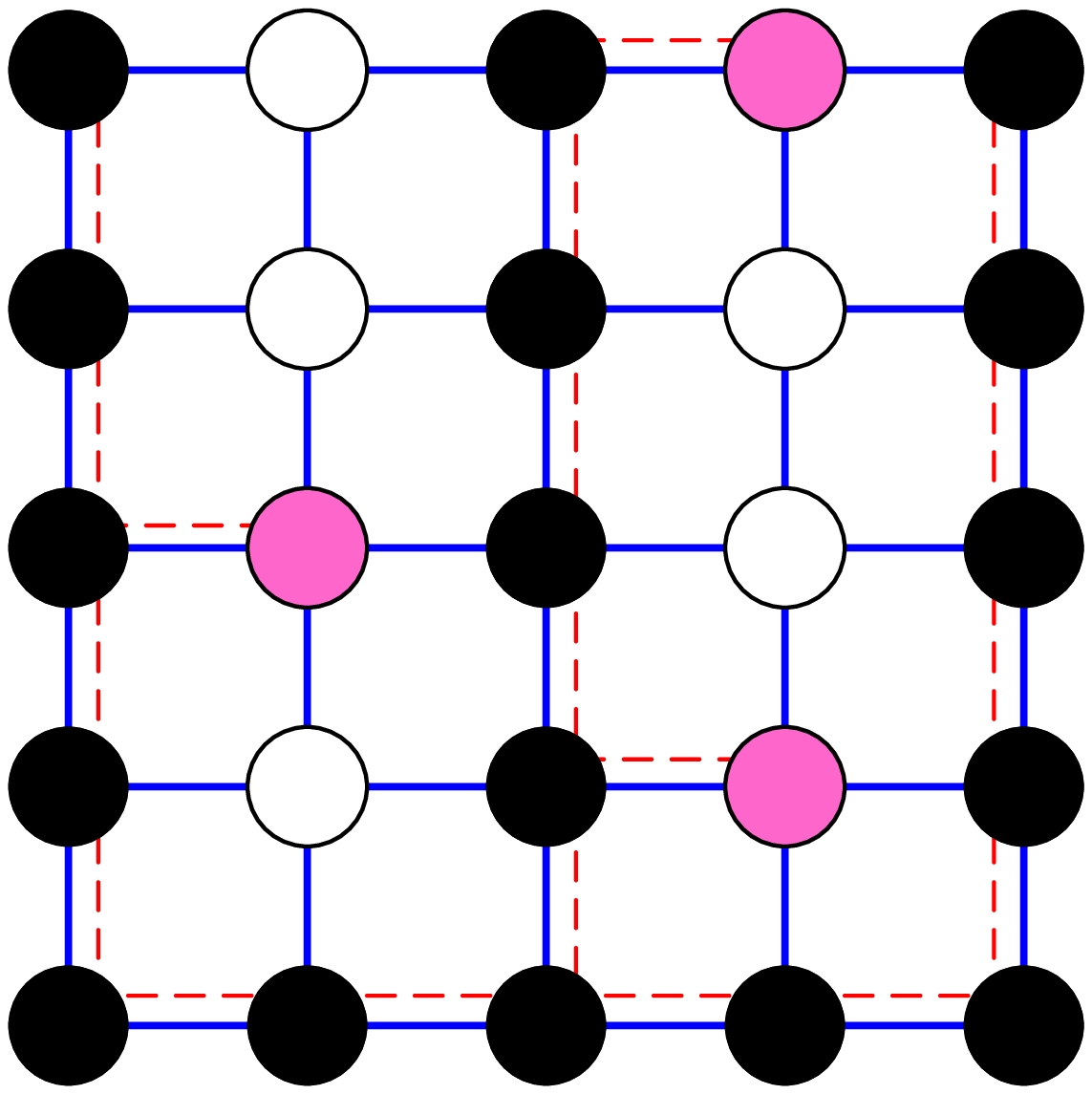}
\\
{\scriptsize (a) The set of black nodes as a hub } & {\scriptsize (b) Measure pink nodes via the hub}\\
\end{tabular}
      \caption{Sparse recovery on two-dimensional grid}
      \label{fig:2d}
   \end{figure}

We assume $\sqrt{n}$ to be even here for notational simplicity, and the result can be easily modified for the case that $\sqrt{n}$ is odd.
The idea of measurement construction is similar to that for graph $\mathcal{G}^4$.
First, Let $S_1$ contain the nodes in the first row and all the nodes in the odd columns. Then $S_1$ can be used as a hub to measure $k$-sparse subvectors associated with nodes in $V \backslash S_1$, as shown in Fig. \ref{fig:2d}. The number of measurements 
is $M^C_{k, (n/2-\sqrt{n}/2)}+1$. Then let $S_2$ contain the nodes in the first row and all the nodes in the even columns, and use $S_2$ as a hub to recover up to $k$-sparse subvectors associated with nodes in $V \backslash S_2$. Then number of measurements required is also $M^C_{k, (n/2-\sqrt{n}/2)}+1$. Finally, use nodes in the second row as a hub to recover sparse subvectors associated with nodes in the first row. Since nodes in the second row are already identified in the above two steps, then we do not need to measure the hub separately in this step. The number of measurements here is  $M^C_{k, \sqrt{n}}$. 
Therefore,
\begin{theorem}
The number of measurements needed to recover $k$-sparse vectors associated with $\mathcal{G}^{2d}$ is at most $2M^C_{k, n/2-\sqrt{n}/2}+M^C_{k,\sqrt{n}}+2$. 
\end{theorem}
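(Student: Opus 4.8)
The plan is to imitate the hub-based construction already used for $\mathcal{G}^4$ (the discussion preceding Theorem~\ref{thm:ring4}): partition $V$ into three blocks, and for each block exhibit a \emph{connected} set of nodes that serves as a hub for it, so that the block behaves like a complete graph at the cost of at most one extra measurement for the hub. All measurements produced this way are rows supported on connected induced subgraphs, hence feasible, and the decoding can be carried out in stages even though the measurements themselves stay non-adaptive.

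\textbf{Step 1 (the two column blocks).} Let $S_1$ consist of all nodes of the first row together with all nodes lying in odd-indexed columns, and let $S_2$ consist of all nodes of the first row together with all nodes in even-indexed columns. First I would verify that $G_{S_1}$ is connected: each odd column induces a path, and the first row induces a path meeting every column, so the first row ``glues'' the column paths together; the same argument gives connectivity of $G_{S_2}$. Next I would verify the hub property: a node outside $S_1$ lies in an even column and in some row $\geq 2$, and its horizontal neighbour in the adjacent odd column lies in $S_1$, so $S_1$ is a hub for $V\setminus S_1$ (at the leftmost/rightmost columns one uses the neighbour on the available side); symmetrically $S_2$ is a hub for $V\setminus S_2$. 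By the hub principle stated before Theorem~\ref{thm:ring4}, for any $W\subseteq V\setminus S_1$ the set $S_1\cup W$ induces a connected subgraph, so any complete-graph measurement on the coordinates of $V\setminus S_1$ is realized by measuring $S_1\cup W$ and subtracting the single measurement $\sum_{i\in S_1}x_i$. Since the restriction of a $k$-sparse $\bfx$ to any index subset is again at most $k$-sparse, the subvector on $V\setminus S_1$ is recovered with $M^C_{k,|V\setminus S_1|}+1$ measurements, and likewise for $V\setminus S_2$. A node count gives $|V\setminus S_1|=|V\setminus S_2|=n/2-\sqrt n/2$ (the nodes of the even columns, resp. odd columns, that are not in the first row), so this block costs $2M^C_{k,\,n/2-\sqrt n/2}+2$ in all.

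\textbf{Step 2 (the first row) and conclusion.} After Step 1 every coordinate of $\bfx$ outside the first row is known, because $(V\setminus S_1)\cup(V\setminus S_2)$ is exactly the set of non-first-row nodes; in particular $\sum_{i\in\text{row }2}x_i$ is known. The second row induces a path (hence is connected), and each first-row node is adjacent to the node directly below it, so the second row is a hub for the first row. As above, any complete-graph measurement on the $\sqrt n$ first-row coordinates is realized by adjoining the whole second row, but now \emph{no} extra measurement is needed for the hub since its sum is already determined; thus the first-row subvector is recovered with $M^C_{k,\sqrt n}$ further measurements. Adding the two counts gives $\big(2M^C_{k,\,n/2-\sqrt n/2}+2\big)+M^C_{k,\sqrt n}$, which is the claimed bound, and combined with (\ref{eqn:MC}) it is $O\!\big(2k\log n\big)+O\!\big(k\log(\sqrt n/k)\big)$.

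The substantive-but-routine part — and essentially the only place an error could creep in — is the bookkeeping: checking connectivity of $G_{S_1}$, $G_{S_2}$ and the second row, checking the hub property at the boundary columns, and the node counts $n/2-\sqrt n/2$ and $\sqrt n$. The parity assumption ``$\sqrt n$ even'' is used only to keep these counts clean; the odd case follows by the same argument with the appropriate $\lceil\cdot\rceil$ and $\lfloor\cdot\rfloor$. There is no genuine analytic obstacle here: everything reduces to the hub lemma together with the bound (\ref{eqn:MC}) on $M^C_{k,n}$.
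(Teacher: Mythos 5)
Your proposal is correct and follows essentially the same route as the paper: the same two hubs $S_1$ (first row plus odd columns) and $S_2$ (first row plus even columns) for the off-first-row nodes, followed by using the already-recovered second row as a free hub for the first row, with the identical measurement count $2M^C_{k,\,n/2-\sqrt n/2}+M^C_{k,\sqrt n}+2$. The only difference is that you spell out the connectivity and hub-property checks that the paper leaves implicit.
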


\subsection{Tree}\label{sec:tree}
Next we consider a tree topology as in Fig. \ref{fig:tree}. For a given tree, the
root is treated as the only node in layer 0. The nodes that are $t$
steps away from the root are in layer $t$. We say the tree has depth
$h$ if the farthest node is $h$ steps away from the root. Let $n_i$ denote the number of nodes on layer $i$, and $n_0=1$. We 
construct measurements to recover vectors associated with a tree by the following \textit{tree approach}.
\begin{figure}[h]
\begin{center}
  \includegraphics[scale=0.25]{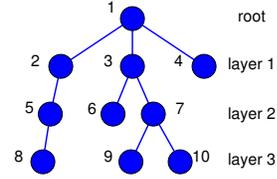}
  \caption{Tree topology}\label{fig:tree}
  \end{center}
\end{figure}
%

We recover the nodes layer by layer
starting from the root, and recovering nodes in layer $i$ requires
that all the nodes above layer $i$ should already be recovered. First measure the root separately.  When recovering the subvector associated with nodes in  layer $i$ ($2\leq i \leq h$), 
we can measure the sum of any subset of nodes in layer $i$ using some nodes in the upper layers as hub and then delete the value of the hub from the obtained sum.
One simple way to find a hub 
is to trace back from  nodes to be
measured
on the tree simultaneously until they reach one same node. For example in Fig. \ref{fig:tree}, 
in order to measure nodes 5 and 7 together, we will trace back to the root and measure nodes 1, 2, 3, 5, and 7 together and then subtract the values of nodes 1, 2, and 3, which are already identified when we recover nodes in the upper layers. 
With this approach, we have,
\begin{theorem}\label{thm:tree}
$\sum_{i=0}^h M^C_{k,n_i}$ measurements are enough to recover $k$-sparse vectors associated with a tree with depth $h$, where $n_i$ is the number of nodes in layer $i$. 
\end{theorem}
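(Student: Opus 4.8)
The plan is to formalize the layer-by-layer recovery procedure sketched in the text and account for the measurement budget at each layer. First I would fix a tree of depth $h$ with $n_i$ nodes on layer $i$ ($n_0 = 1$), and describe the recovery in $h+1$ stages indexed by $i = 0, 1, \dots, h$. Stage $0$ simply measures the root by itself, costing $M^C_{k,n_0} = M^C_{k,1}$ measurements (trivially one). For stage $i \ge 1$, the inductive hypothesis is that the values on all nodes in layers $0, \dots, i-1$ are already known; I want to show we can recover the subvector $\bfx^{(i)}$ on layer $i$, which is at most $k$-sparse, using $M^C_{k,n_i}$ measurements.

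The key step is to argue that the $n_i$ nodes of layer $i$ behave like a complete graph of $n_i$ nodes, in the sense made precise in the ``hub'' discussion before Theorem~\ref{thm:ring4}: for any subset $W$ of layer-$i$ nodes, there is a connected induced subgraph of $G$ containing $W$ whose other vertices lie entirely in layers $0,\dots,i-1$. Concretely, for each $w \in W$ take the unique path from $w$ up to the root; the union of these paths together with $W$ is connected (all paths share the root) and is an induced subtree, hence a feasible measurement set under (A1). Measuring this set gives the sum $\sum_{w \in W} x_w$ plus the sum of the already-known values on the ancestor nodes, so subtracting the latter yields $\sum_{w\in W} x_w$ exactly. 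Thus any linear functional that a measurement matrix for the complete graph $K_{n_i}$ would use on layer $i$ can be realized here. Applying the definition of $M^C_{k,n_i}$, there is a $0$-$1$ matrix with $M^C_{k,n_i}$ rows that identifies all $k$-sparse vectors in $\mathcal{R}^{n_i}$; simulating each of its rows by one measurement-plus-subtraction as above recovers $\bfx^{(i)}$.

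Summing the per-stage costs over $i = 0, \dots, h$ gives the claimed bound $\sum_{i=0}^h M^C_{k,n_i}$. Two small points need care. First, one must note that $\bfx^{(i)}$ is at most $k$-sparse because $\bfx$ itself is, so the complete-graph guarantee for sparsity $k$ applies on each layer. Second, the subtraction step requires that \emph{all} ancestors of every node in a measured set are already recovered, which is exactly guaranteed by processing layers in increasing order of depth together with the fact that ancestors of a layer-$i$ node lie strictly above layer $i$; this is why the induction is organized by layer rather than by individual node.

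I expect the main (minor) obstacle to be purely expository: making the ``hub'' reduction rigorous, i.e.\ verifying that the union of root-paths from a subset $W \subseteq$ layer $i$ is a \emph{connected induced} subgraph of $G$ (so that (A1) is satisfied and no extra unrecovered node sneaks into the measurement), and carefully stating that we recover $\bfx^{(i)}$ from the simulated complete-graph measurements rather than merely distinguishing it from other vectors. There is no hard analytic content; the argument is a clean induction on the layer index once the hub observation is in place.
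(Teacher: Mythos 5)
Your proposal is correct and follows essentially the same route as the paper, which only sketches this argument: recover layer by layer, simulate any complete-graph measurement on layer $i$ by adjoining the root-paths of the measured nodes (a connected induced subtree) and subtracting the already-recovered ancestor values, for a per-layer cost of $M^C_{k,n_i}$. Your added care about why the union of root-paths is a feasible measurement set under (A1) and why no separate hub measurement is needed (the hub values are known from earlier layers, unlike in Theorem~\ref{thm:ring4}) is exactly the right formalization of the paper's sketch.
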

%

\section{Sparse Recovery over General Graphs}\label{sec:general}

In this section we consider recovering $k$-sparse vectors over general graphs. The graph is assumed to
be connected. If not, we simply  treat each component as a connected subgraph and
design measurements to recover $k$-sparse subvectors associated with each subgraph separately. 

Inspired by the construction methods in Section \ref{sec:special}, 
in Section \ref{sec:bound}
we propose a general design guideline based on ``$r$-partition'' which will be introduced soon. The key idea is to divide the nodes into a small number of groups such that nodes in the same group are connected to one hub, and thus 
can be measured freely with the help of the hub. We use the  Erd\H{o}s-R\'enyi random graph as an example to illustrate the design guideline based on $r$-partition. Since finding the minimum number of such groups in general turns out to be NP-hard, in Section \ref{sec:algo} we propose a simple algorithm to design a small number of measurements to recover $k$-sparse vectors associated with any given graph.

%
%
%
%
\subsection{Measurement Construction Based on $r$-partition}\label{sec:bound}

   In $\mathcal{G}^4$, we divide nodes into odd nodes $T_{\textrm{o}}$ and even nodes $T_{\textrm{e}}$ and use each set as a hub 
   for the other set. In general graphs, we extend this idea 
   and have the following definition:
 \begin{defi}[$r$-partition]\label{def:rp}
Given $G=(V,E)$, disjoint subsets $N_i$ ($i=1,...,r$) of $V$ form an \textbf{$r$-partition} of $G$ if and only if these two conditions both hold: (1) $\cup_{i=1}^r N_i=V$, and (2) $\forall i$, $V \backslash N_i$ is a hub for $N_i$.
 \end{defi}
Clearly, $T_{\textrm{o}}$ and $T_{\textrm{e}}$ form a $2$-partition of graph $\mathcal{G}^4$. With the above definition, we have 
\begin{theorem}\label{thm:component}
If $G$ has an $r$-partition $N_i$ ($i=1,...,r$), then the number of measurements needed to recover $k$-sparse vectors associated with $G$ is at most  $\sum_{i=1}^r M^C_{k,|N_i|}+r$, which is $O(r k\log (n/k))+r$. 
\end{theorem}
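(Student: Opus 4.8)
The plan is to establish the upper bound by reducing the recovery problem on $G$ to $r$ independent recovery problems on complete graphs, using the hub structure guaranteed by the $r$-partition. First I would fix an $r$-partition $N_1,\dots,N_r$ of $G$. The key observation, already exploited in the proofs of Theorems \ref{thm:ring4} and \ref{thm:ring4h}, is that since $V\backslash N_i$ is a hub for $N_i$, the induced subgraph on $N_i$ behaves ``as if complete'': for any subset $W\subseteq N_i$, the set $(V\backslash N_i)\cup W$ induces a connected subgraph, because $V\backslash N_i$ is connected and every node of $W$ has a neighbor in $V\backslash N_i$. Hence $(V\backslash N_i)\cup W$ is a legal measurement set under assumption (A1).

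Next I would describe the measurement scheme explicitly. For each $i=1,\dots,r$, reserve one measurement that reads the sum $s_i$ of all entries of $\bfx$ supported on $V\backslash N_i$ --- this is legal since $V\backslash N_i$ is connected. Then take an optimal (or near-optimal) set of $M^C_{k,|N_i|}$ measurements for a complete graph on the vertex set $N_i$; each such measurement corresponds to some $W\subseteq N_i$, and we realize it over $G$ by measuring $(V\backslash N_i)\cup W$ and subtracting $s_i$. This yields exactly the arithmetic sum $\sum_{j\in W}x_j$, so the recovered measurement vector for the subvector $\bfx|_{N_i}$ is identical to what a complete-graph measurement matrix would produce. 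Since $\bfx$ is $k$-sparse, each restriction $\bfx|_{N_i}$ is at most $k$-sparse, so these $M^C_{k,|N_i|}$ measurements identify $\bfx|_{N_i}$. Doing this for all $i$ recovers $\bfx$ on $\cup_i N_i = V$, i.e.\ all of $\bfx$. The total count is $\sum_{i=1}^r\left(M^C_{k,|N_i|}+1\right)=\sum_{i=1}^r M^C_{k,|N_i|}+r$.

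Finally I would derive the $O(rk\log(n/k))+r$ form. Using (\ref{eqn:MC}), $M^C_{k,|N_i|}=O(k\log(|N_i|/k))$, and since $|N_i|\le n$ each term is $O(k\log(n/k))$; summing over the $r$ parts gives $O(rk\log(n/k))$, and adding the $r$ hub measurements gives the claimed bound. One should note the mild technical point that this asymptotic bound is meaningful when each $|N_i|\gtrsim k$; if some part is smaller, that part needs only $|N_i|\le k$ trivial singleton-style measurements, which is absorbed into the bound.

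The main obstacle --- really the only subtle point --- is verifying that each subtraction step is valid, namely that $s_i$ is correctly and separately obtained before (or independently of) the $N_i$-measurements, and that $(V\backslash N_i)\cup W$ is genuinely connected for \emph{every} $W\subseteq N_i$ including $W=\emptyset$ and $W=N_i$. The former is immediate because the measurement giving $s_i$ involves only nodes in $V\backslash N_i$, whose values do not depend on $\bfx|_{N_i}$; the latter follows directly from the two defining conditions of an $r$-partition (connectivity of the hub plus the adjacency condition). Everything else is bookkeeping: the reduction to the complete-graph case is exact, not approximate, so no error terms accumulate across the $r$ groups.
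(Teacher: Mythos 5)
Your proposal is correct and follows essentially the same route as the paper: it uses the hub $V\backslash N_i$ to emulate a complete-graph measurement matrix on each part $N_i$ (one extra measurement per part for the hub sum $s_i$, then subtraction), exactly as in the paper's one-line proof built on the hub idea of Theorem \ref{thm:ring4}. The only difference is that you spell out the connectivity and bookkeeping details that the paper leaves implicit.
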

\begin{proof}
Note that  $M^C_{k, |N_i|}+1$ measurements (with one additional measurement for $V \backslash N_i$) are enough to recover $k$-sparse subvector associated with $N_i$  via its hub $V \backslash N_i$.
\end{proof}
We next apply this result to the Erd\H{o}s-R\'enyi random graph $G(n,p)$, which contains $n$ nodes and there exists an link between any two nodes independently with probability $p$.  
Note that if $p \geq (1+\epsilon)\log n/n$ for some constant $\epsilon >0$, $G(n,p)$ is connected almost surely \cite{Bollobas01}.
\begin{theorem}
For Erd\H{o}s-R\'enyi random graph $G(n,p)$ with $p=\beta\log n/n$, if $\beta \geq 2+\epsilon$ for some constant $\epsilon>0$, then any two disjoint subsets $N_1$ and $N_2$ of nodes with $|N_1|=|N_2|=n/2$ form a 2-partition with high probability. Moreover, with high probability the number of measurements needed to recover $k$-sparse vectors associated with $G(n,p)$ is at most $2M^C_{k, n/2}+2$, which is $O(2k \log(n/(2k)))+2$.
\end{theorem}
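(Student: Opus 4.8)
The plan is to verify the two conditions of Definition~\ref{def:rp} for an arbitrary balanced bipartition $N_1, N_2$ of $G(n,p)$ and then invoke Theorem~\ref{thm:component} with $r=2$. Since $\cup_{i=1}^2 N_i = V$ holds by construction, the real work is to show that, with high probability, $V\setminus N_1 = N_2$ is a hub for $N_1$ and (symmetrically) $V\setminus N_2 = N_1$ is a hub for $N_2$. Recalling the definition of a hub, for $N_2$ to be a hub for $N_1$ we need (i) the induced subgraph $G_{N_2}$ to be connected, and (ii) every node $u\in N_1$ to have at least one neighbor in $N_2$. By symmetry it then suffices to prove both properties for one side and union-bound over the two sides (and, if we want the statement to hold simultaneously for \emph{all} balanced bipartitions, over the $\binom{n}{n/2}$ choices as well — I would first do it for a fixed partition, then remark on the uniform version).

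For property (ii): fix $u\in N_1$. The number of potential edges from $u$ into $N_2$ is $|N_2| = n/2$, each present independently with probability $p=\beta\log n/n$. So the probability that $u$ has \emph{no} neighbor in $N_2$ is $(1-p)^{n/2}\le e^{-pn/2} = e^{-(\beta/2)\log n} = n^{-\beta/2}$. Union-bounding over the at most $n$ nodes $u$, the probability that some node lacks a neighbor on the other side is at most $n\cdot n^{-\beta/2} = n^{1-\beta/2}$, which tends to $0$ precisely when $\beta > 2$, i.e. $\beta \ge 2+\epsilon$. This is the step that pins down the threshold $\beta \ge 2+\epsilon$, and it is the main thing to get right.

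For property (i): $G_{N_2}$ is itself distributed as an Erd\H{o}s--R\'enyi graph $G(n/2, p)$ with the \emph{same} $p = \beta\log n/n$. Writing $m = n/2$, this edge probability equals $\beta\log n/n = \beta\log n/(2m) = (\beta/2 + o(1))\log m / m$, so since $\beta/2 \ge 1+\epsilon/2 > 1$, the cited connectivity threshold for Erd\H{o}s--R\'enyi graphs (\cite{Bollobas01}, already invoked in the paragraph preceding the theorem) gives that $G_{N_2}$ is connected with high probability; the same holds for $G_{N_1}$. Combining (i) and (ii) via a union bound over the two sides shows that $N_1, N_2$ form a $2$-partition w.h.p. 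Then Theorem~\ref{thm:component} with $r=2$ immediately yields that at most $M^C_{k,|N_1|} + M^C_{k,|N_2|} + 2 = 2M^C_{k,n/2} + 2$ measurements suffice, and by~(\ref{eqn:MC}) this is $O(2k\log(n/(2k))) + 2$, completing the proof.

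The main obstacle is bookkeeping of the failure probabilities and making sure the constants line up: specifically, confirming that $\beta \ge 2+\epsilon$ is exactly what property~(ii) needs (the $n\cdot n^{-\beta/2} \to 0$ condition) while property~(i) needs only the weaker $\beta/2 > 1$, so that the binding constraint is indeed $\beta > 2$. A secondary subtlety, if one wants the "any two disjoint subsets" phrasing to mean "all balanced bipartitions simultaneously," is that the union bound over $\binom{n}{n/2} = 2^{\Theta(n)}$ partitions would swamp the $n^{1-\beta/2}$ bound; so I expect the intended reading is that a fixed (or adversarially pre-specified) balanced bipartition works w.h.p., and I would state it that way rather than attempt the much stronger uniform claim.
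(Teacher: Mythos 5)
Your proposal is correct and follows essentially the same route as the paper's own proof: both verify the hub conditions by (a) noting that $G_{N_1}$ and $G_{N_2}$ are $G(n/2,p)$ graphs that are connected w.h.p.\ at this edge density, and (b) bounding the probability that some node lacks a neighbor on the other side by $n(1-p)^{n/2}\le n^{1-\beta/2}\to 0$ for $\beta\ge 2+\epsilon$, before invoking Theorem~\ref{thm:component}. Your closing remark is also consistent with the paper, whose proof likewise fixes an arbitrary pre-specified balanced bipartition rather than union-bounding over all of them.
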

\begin{proof}
Let $N_1$ be any subset of $V$ with $|N_1|=n/2$, and let
$N_2=V \backslash N_1$.  
Then $G_{N_1}$ and $G_{N_2}$ are both Erd\H{o}s-R\'enyi random graphs with $n/2$ nodes, and are connected almost surely when $p\geq (2+\epsilon) \log n/n$. 

We claim that with high probability, for every $u \in
N_1$, there exists $v \in N_2$ such that $(u,v) \in E$. 
Let $P_1$ denote the probability that there exists some $u \in
N_1$ such that $(u, v) \notin E$ for every $v \in N_2$. Then
\begin{align*}
P_1&= \sum_{u\in N_1}(1-p)^{n/2}= \frac{n}{2}
 (1-\beta \log n/n)^{n/2}\\
  &= \frac{n}{2}
(1-\frac{\beta \log n}{n})^{ \frac{n}{\beta \log n}\cdot \frac{ \beta \log n}{2}}\leq \frac{n}{2} e^{-\frac{\beta\log n}{2}} \leq \frac{n^{-\epsilon/2}}{2}, 
\end{align*}
where the last inequality holds from $\beta \geq 2+\epsilon$. Then $P_1$ goes to zero as $n$ goes to infinity, 
and the claim follows.  Similarly, one can prove that with
high probability for every $v \in N_2$, there exists $u
\in N_1$ such that $(u,v) \in E$.

Then with high probability $N_1$ and $N_2$ form a 2-partition. The second statement follows from Theorem \ref{thm:component}.
\end{proof}

\cite{CKMS10} considers group testing over Erd\H{o}s-R\'enyi random graphs and shows that $O(k^2 \log^3 n)$ measurements 
are enough to identify up to $k$ non-zero entries in an $n$-dimensional logical vector provided that $p=\Theta (k \log^2 n/n)$. Here with compressed sensing setup and 2-partition results, we can recover $k$-sparse vectors in $\mathcal{R}^n$ with $O(2k\log(n/(2k)))+2$ measurements when $p>(2+\epsilon)\log n/n$ for some $\epsilon>0$. Note that this result also improves over the previous result in \cite{XMT11}, which requires $O(k \log^3 n)$ measurements for compressed sensing on $G(n,p)$.

From Theorem \ref{thm:component}, the number of measurements used is closely related to the value $r$. In general one  wants to reduce $r$ so as to reduce the number of measurements. 
Given graph $G$ and integer $r$, the question that whether or not $G$ has an $r$-partition is called \textit{$r$-partition problem}.
In fact,
\begin{prop}\label{thm:np}
$\forall r \geq 3$, $r$-partition problem is NP-complete.
\end{prop}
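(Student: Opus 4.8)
The plan is to establish NP-completeness of the $r$-partition problem for each fixed $r \geq 3$ in the usual two moves: membership in NP, then a polynomial-time reduction from a known NP-complete problem. Membership is easy: given candidate subsets $N_1,\dots,N_r$, one checks in polynomial time that they are disjoint, that their union is $V$, and that for each $i$ the induced subgraph $G_{V \setminus N_i}$ is connected and every vertex of $N_i$ has a neighbor in $V \setminus N_i$; this is exactly the definition of a hub, so the certificate is verifiable in polynomial time. The substance is the hardness reduction.

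For the reduction I would start from \textbf{graph $r$-colorability}, which is NP-complete for every fixed $r \geq 3$. The intuition is that condition (2) in Definition \ref{def:rp} forces each $N_i$ to behave somewhat like an independent-ish set: if some $N_i$ contained a whole connected ``chunk'' cut off from the rest, then $V \setminus N_i$ could not dominate it, and moreover $V \setminus N_i$ itself must stay connected, which restricts how large and how clustered each $N_i$ can be. So the first step is to take an instance $H$ of $r$-colorability and build a graph $G$ (by attaching gadgets to $H$ — for instance a connected ``backbone'' or ``apex''-type structure that every part is forced to touch, plus pendant/auxiliary vertices ensuring the connectivity-of-complement and domination conditions translate into ``no two adjacent vertices of $H$ land in the same part''). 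The design goal for the gadget is: $G$ has an $r$-partition if and only if the $H$-vertices can be partitioned into $r$ classes with each class inducing (within $H$) no edge, i.e. iff $H$ is $r$-colorable. One direction (coloring $\Rightarrow$ $r$-partition) is checked by exhibiting the partition explicitly and verifying the hub conditions using the gadget's connectivity; the other direction (r-partition $\Rightarrow$ coloring) argues that any valid $r$-partition, when restricted to the $H$-vertices, cannot place two adjacent $H$-vertices together without violating either the domination requirement or the connectedness of some $V \setminus N_i$.

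The main obstacle I anticipate is getting the gadget exactly right so that \emph{both} halves of condition (2) — ``$V\setminus N_i$ is connected'' and ``$V\setminus N_i$ dominates $N_i$'' — are simultaneously what enforce properness of the coloring, without accidentally making the problem trivially satisfiable (e.g. by letting one part be empty or a single vertex) or trivially unsatisfiable. In particular one must prevent ``degenerate'' $r$-partitions that ignore the color structure; a standard fix is to pad with forced vertices or to duplicate vertices so that every part is compelled to be nonempty and to interact with the backbone in a controlled way. A secondary subtlety is that the definition allows fewer than $r$ nonempty parts (the $N_i$ need only be disjoint with union $V$), so the reduction should either handle possibly-empty parts gracefully or include a gadget forcing all $r$ parts to be used; I would handle this by noting that an $r$-colorable graph is also $r'$-colorable for $r' \geq r$ when the gadget supplies enough ``free'' vertices, so empty parts never help the adversary. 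Once the gadget is fixed, the correctness proof and the polynomial-time bound on the construction are routine.
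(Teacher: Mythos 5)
Your high-level plan matches the paper's: membership in NP is checked exactly as you describe, and the hardness reduction is indeed from $r$-coloring. But the entire substance of the proof is the gadget, and you never construct one --- you only state the design goal and list the obstacles. As written, the proposal is a proof outline with its central step missing, and your guiding intuition (that the hub condition forces each $N_i$ to be ``independent-ish'') points in a slightly wrong direction: in the paper's construction the original vertex set is turned into a \emph{clique}, so the parts restricted to $V$ are as far from independent sets in the new graph as possible; properness of the induced coloring is enforced by an entirely separate mechanism.

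Concretely, the paper's gadget is: given the $r$-coloring instance $G=(V,E)$, (i) for each edge $(u,v)\in E$ add a new vertex $w$ adjacent only to $u$ and $v$ (let $W$ be the set of these), and (ii) add all missing edges inside $V$ so that $V$ induces a clique in the augmented graph $H$. The clique makes the domination and complement-connectivity conditions essentially free for the $V$-vertices, while each degree-$2$ vertex $w$ encodes the constraint ``$u$ and $v$ get different colors'': if $u,v$ were in the same part $S_i$, then either $w\in S_i$, in which case $w$ has no neighbor in $V'\setminus S_i$ (domination fails), or $w\notin S_i$, in which case $w$ is isolated in $H_{V'\setminus S_i}$ (connectivity fails). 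Conversely, a proper coloring $C_1,\dots,C_r$ yields the partition $N_i=C_i$ for $i<r$ and $N_r=C_r\cup W$, whose hub conditions are verified using the clique on $V$ and the fact that each $w$ has a neighbor outside $N_r$. Your worries about empty parts and degenerate partitions turn out to be harmless for this particular gadget (an $r$-partition with empty parts still induces a proper $\le r$-coloring, and vice versa), but without exhibiting a construction and verifying both directions, the argument is not complete.
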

Please refer to Appendix for its proof. We remark that we cannot prove the hardness of the $2$-partition problem though we conjecture it is also a hard problem.
\subsection{Measurement Construction Algorithm for General Graphs}\label{sec:algo}
Section \ref{sec:bound} proposes the $r$-partition concept as a measurement design guideline. 
But finding an $r$-partition with the smallest $r$ in general is NP-hard. 
Now given a connected graph $G$, how shall we efficiently design a small number of measurements 
to recover $k$-sparse vectors associated with $G$?

One simple way is to find the spanning tree of $G$, and then use the tree approach in Section \ref{sec:tree}. 
The depth of the spanning tree is at least $R$, where $R=\min_{u \in V}
\max_{v \in V} d_{uv}$ is the radius of $G$ with $d_{uv}$ as the
length of the shortest path between $u$ and $v$. 
This approach only uses links in the spanning tree, and the number of measurements used 
is large when the radius $R$ is large. For example, the radius of  $\mathcal{G}^4$ in Fig. \ref{fig:ring4} is $n/4$, then the spanning tree approach uses at least $n/4$ measurements, one for each layer. However,  
the number of measurements can be as small as $O(2k \log (n/2k))+2$ if we take advantage of the additional links.

Here we propose a simple algorithm to design the measurements for general graphs. The algorithm combines the ideas of the tree approach and the $r$-partition. 
We still  divide nodes into a small number of groups such that each group can be identified via some hub. Here nodes in the same group are the leaf nodes of a spanning tree of a gradually reduced graph. A leaf node has no
children on the tree.

Let $G^*=(V^*,E^*)$ denote the input graph. The algorithm is built on the following two subroutines. 
 \textbf{Leaves}($G$, $u$) returns the set of leaf nodes of a spanning tree of $G$ rooted at $u$.  \textbf{Reduce}($G=(V,E)$, $u$, $H$) deletes $u$ from $G$ and fully connects all the neighbors of $u$. Specifically, for every two neighbors $v$ and
$w$ of $u$, 
we add a link
$(v,w)$, if not already exist, 
and let $H_{(v,w)}=H_{(v,u)}\cup H_{(u,w)} \cup \{u \}$, where for each link $(s,t) \in E$, $H_{(s,t)}$ denotes the set of nodes, if any,  that serves as a hub for $s$ and $t$ in the original graph $G^*$. We record $H$ such that measurements constructed on a reduced graph $G$ can be feasible in $G^*$.

%
 \floatname{algorithm}{Subroutine}
\begin{algorithm}
\begin{algorithmic}[1]
\REQUIRE graph $G$, root $u$
 \STATE  Find a spanning tree $T$ of $G$ rooted at $u$ by breadth-first search, and let $S$ denote the set of leaf nodes of $T$.
 \RETURN $S$
 \end{algorithmic}
 \caption{\textbf{Leaves}($G$, $u$)}
 \end{algorithm}
 \begin{algorithm}
\begin{algorithmic}[1]
\REQUIRE  $G=(V, E)$,  $H_e$ for each $e \in E$, and node $u$
 \STATE  $V=V\backslash u$.
 \FORALL {two different neighbors $v$ and
$w$ of $u$}
 \IF {$(v,w) \notin E$}
 \STATE  $E=E \cup (v,w)$, $H_{(v,w)} =H_{(v,u)} \cup H_{(u,w)} \cup \{u \}$.
 \ENDIF
 \ENDFOR
 \RETURN $G$, $H$
 \end{algorithmic}
 \caption{\textbf{Reduce}($G$, $u$, $H$)}
 \end{algorithm}
\setcounter{algorithm}{0}
 \floatname{algorithm}{Algorithm}
 \renewcommand{\algorithmicreturn}{\textbf{Output:}}
  \begin{algorithm}
\begin{algorithmic}[1]
\REQUIRE  $G^*=(V^*, E^*)$.
 \STATE $G=G^*$, $H_e=\varnothing$ for each $e \in E$
\WHILE { $|V| > 1$}
 \STATE  Find the node $u$ such that $\max_{v \in V} d_{u v}= R^G$, where $R^G$ is the radius of $G$.  $S=$\textbf{Leaves}($G$, $u$).
  \STATE Design $f(k, |S|)+1$ measurements to recover $k$-sparse vectors associated with $S$ using nodes in $V\backslash S$ as a hub.
   \FORALL {$u$ in $S$}
 \STATE $G=$ \textbf{Reduce}($G$, $u$, $H$)
 \ENDFOR
 \ENDWHILE
 \STATE Measure the last node in $V$ directly.
 \RETURN All the measurements.
 \end{algorithmic}
 \caption{Measurement construction for graph $G^*$}\label{algo:design}
 \end{algorithm}

Given graph $G^*$, 
let $u$ denote the
node such that $\max_{v \in V^*} d_{u v}= R$, where $R$ is the radius of $G^*$. Pick $u$ as the root
and obtain a spanning tree $T$ of $G^*$ by breadth-first search. Let
$S$ denote the set of leaf nodes in $T$.  
With $V^* \backslash S$ as a hub, we
can design $f(k,|S|)+1$ measurements to recover up to $k$-sparse vectors associated with $S$.  
We then reduce the network by deleting every $u$ in $S$ and fully connects all the neighbors of $u$. 
For the obtained reduced network $G$, we repeat the above process until all the nodes are deleted. 
Note that when designing the measurements in a reduced graph $G$, if a measurement uses link $(v, w)$, then it should also include nodes in $H_{(v,w)}$ so as to be feasible in the original graph $G^*$.

In each step tree $T$ is rooted at node $u$ where $ \max_{v \in {V}} d_{u v}$ equals the radius of the current graph $G$. Since all the leaf nodes
of $T$ are deleted in the graph reduction procedure, the radius of the new obtained graph should be reduced by at least one. 
Then we have at most $R$ iterations in Algorithm \ref{algo:design} until only one node is left. Clearly we have,
\begin{prop}\label{prop:algo}
The number of measurements designed by Algorithm \ref{algo:design} is at most $Rf(k, n)+R+1$, where $R$ is the radius of the graph.
\end{prop}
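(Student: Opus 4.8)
The plan is to analyze Algorithm \ref{algo:design} iteration by iteration and bound both the number of iterations and the number of measurements produced per iteration. First I would establish the iteration count. In each pass of the \textbf{while} loop, the algorithm roots a breadth-first spanning tree $T$ at a node $u$ achieving $\max_{v\in V} d_{uv}=R^G$, the radius of the current reduced graph $G$, and then deletes every leaf node of $T$ via \textbf{Reduce}. The key observation is that every node at distance exactly $R^G$ from $u$ is a leaf of the BFS tree (it can have no children, since a child would be at distance $R^G+1$, contradicting that $R^G$ is the eccentricity of $u$, which in turn is the radius). Hence all such farthest nodes are removed, and in the reduced graph $u$ has eccentricity at most $R^G-1$; since deleting nodes and adding edges (as \textbf{Reduce} does, fully connecting neighbors of a removed node) can only shorten shortest paths, the radius of the new graph is at most $R^G-1$. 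Therefore the radius strictly decreases by at least one per iteration, so starting from radius $R$ we reach $|V|=1$ after at most $R$ iterations, plus the final single-node measurement.

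Next I would bound the measurements contributed per iteration. In iteration $j$ the algorithm designs $f(k,|S_j|)+1$ measurements, where $S_j$ is the set of leaf nodes and $V\setminus S_j$ serves as a hub for $S_j$ (this hub property is exactly what lets us measure any subset of $S_j$ freely, as in the discussion following Theorem \ref{thm:ring4} and in Theorem \ref{thm:component}). Since $|S_j|\le n$ and $f(\cdot,\cdot)$ is nondecreasing in its second argument for any reasonable construction, $f(k,|S_j|)\le f(k,n)$. Summing over at most $R$ iterations gives at most $R(f(k,n)+1)=Rf(k,n)+R$ measurements, and adding the single measurement for the last remaining node yields the claimed bound $Rf(k,n)+R+1$. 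I would also note, for completeness, that the measurements designed on a reduced graph remain feasible in $G^*$: a measurement using edge $(v,w)$ is augmented with the recorded hub set $H_{(v,w)}$, which by construction of \textbf{Reduce} consists of nodes whose inclusion reconnects $v$ and $w$ through previously-deleted vertices, so the augmented node set induces a connected subgraph of $G^*$; the values of these hub nodes were already recovered in earlier iterations and can be subtracted off.

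The main obstacle, and the step deserving the most care, is the claim that the radius decreases by at least one in each iteration. One must verify both directions: (i) that all eccentricity-achieving nodes with respect to the chosen root $u$ are genuinely leaves of the BFS tree and thus deleted, and (ii) that after deletion the radius of the reduced graph does not somehow stay the same or increase because the node set has shrunk (a smaller graph could in principle have a larger radius). Point (ii) is handled by observing that \textbf{Reduce} only adds edges among surviving vertices when it removes a vertex, so for any pair of surviving vertices the distance in the reduced graph is at most the distance in $G$; hence the eccentricity of the surviving copy of $u$ drops to at most $R^G-1$, and the radius of the reduced graph (a minimum of eccentricities) is at most that. A secondary subtlety is ensuring $V\setminus S_j$ is nonempty and actually forms a connected hub for $S_j$ — this holds because $S_j$ is the leaf set of a spanning tree, so every leaf has a neighbor among the internal (non-leaf) nodes, and the internal nodes together with the edges of $T$ among them form a connected subtree (assuming $|V|>1$ so that not all nodes are leaves; the root $u$ is internal whenever it has a child). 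These points are routine but are the load-bearing part of the argument.
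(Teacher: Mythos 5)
Your proposal is correct and follows essentially the same route as the paper's (much terser) justification: the radius of the reduced graph drops by at least one per iteration because all nodes at distance $R^G$ from the root are BFS-tree leaves and \textbf{Reduce} never increases distances among surviving nodes, so there are at most $R$ iterations, each contributing at most $f(k,n)+1$ measurements, plus one for the final node. Your added verifications (leaves at maximal depth, edge-addition only shortening paths, non-emptiness and connectivity of the internal-node hub) are exactly the details the paper leaves implicit.
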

We  remark  that the number of measurements by the spanning tree approach we mentioned at the beginning of Section \ref{sec:algo} is also no greater than $Rf(k, n)+R+1$.   
However, we expect that Algorithm 1 uses fewer measurements than the spanning tree approach for general graphs, since Algorithm 1 also considers links that are not in the spanning tree. And it is verified in Experiment 1 in Section \ref{sec:simu}. 

\section{Simulation} \label{sec:simu}
\begin{figure*}[ht]
\begin{minipage}{2.3in}
\begin{tabular}{c}
\begin{psfrags}
\psfrag{Upper bound of number of measurements}[bl][bl][1.1]{\tiny Upper bound of number of measurements}
\psfrag{Number of measurements}[bl][bl][1.1]{\tiny Number of measurements}
\psfrag{Radius}[bl][bl][1.1]{\tiny Radius}
\psfrag{Number of links}[cl][l][1.2]{\tiny Number of links}
\includegraphics[width=1.1\linewidth,height=0.9\linewidth]{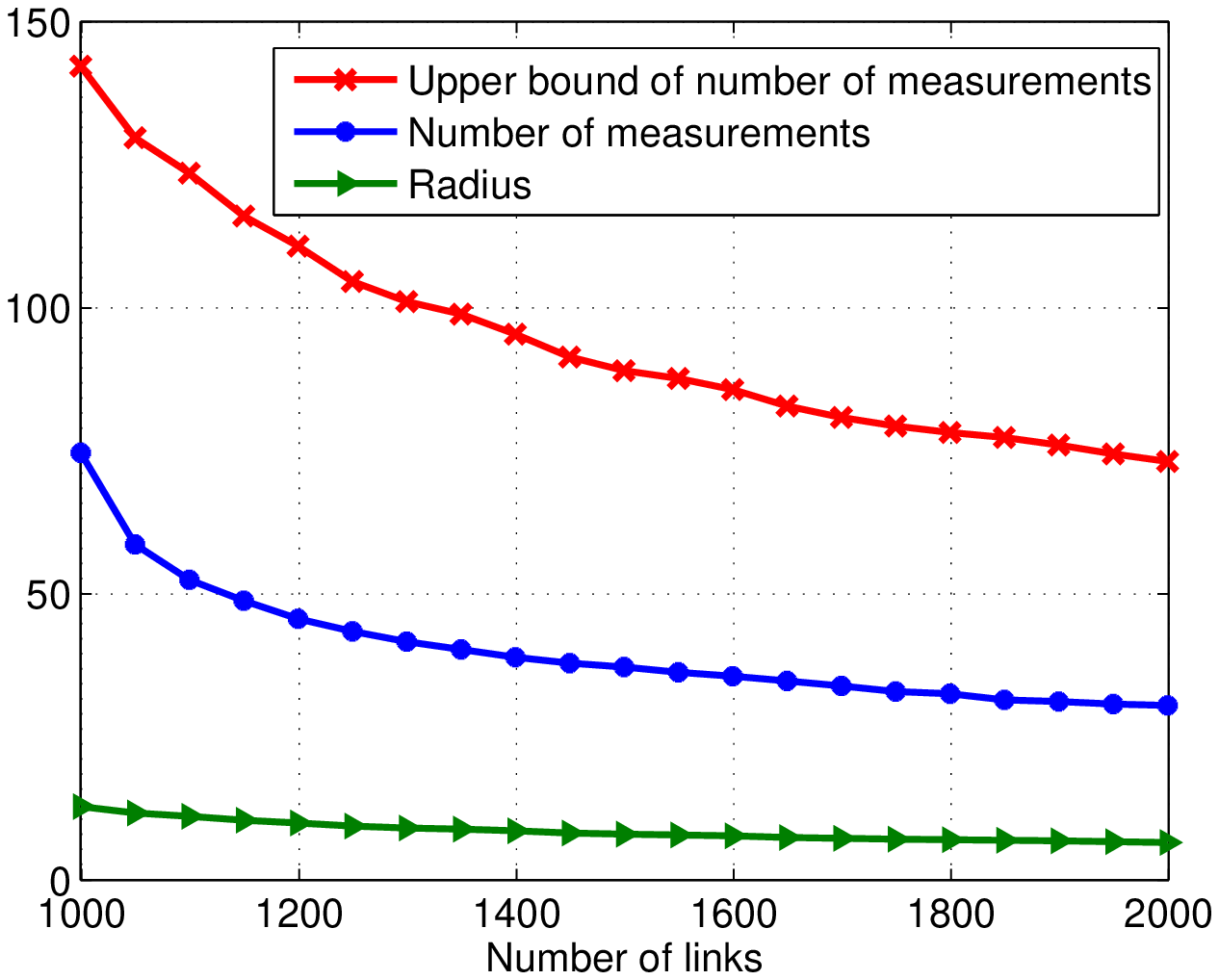}\label{fig:treeplus}
\end{psfrags}
\\
{\scriptsize Fig. 6. Random graph with $n=1000$}
\end{tabular}
\end{minipage}
\begin{minipage}{2.3in}
\begin{tabular}{c}
\begin{psfrags}
\psfrag{m=1 m=1}[bl][bl][1.1]{\tiny $m=1$}
\psfrag{m=2 m=2}[bl][bl][1.1]{\tiny $m=2$}
\psfrag{m=3 m=3}[bl][bl][1.1]{\tiny $m=3$}
\psfrag{Number of nodes}[bl][l][1.2]{\tiny Number of nodes}
\psfrag{Number of measurements}[cl][l][1.2]{\tiny Number of measurements}
\includegraphics[width=1.1\linewidth,height=0.9\linewidth]{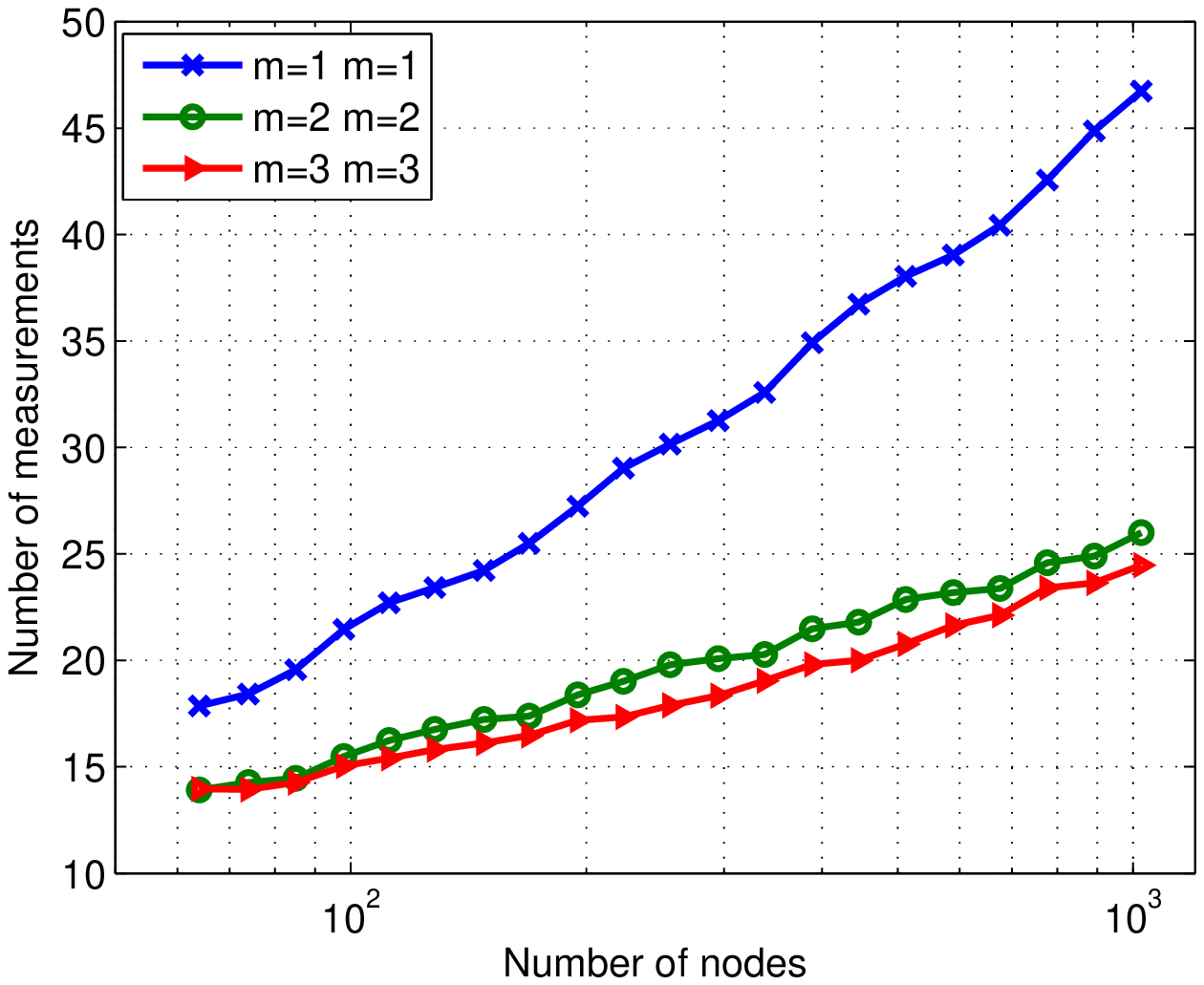}\label{fig:BA}
\end{psfrags}\\
{\scriptsize Fig. 7. BA model with increasing $n$ and different $m$}
\end{tabular}
\end{minipage}
\begin{minipage}{2.3in}
\begin{tabular}{c}
\begin{psfrags}
\psfrag{ellell, with noise noise}[bl][bl][1.1]{\tiny $\ell_1$, with noise}
\psfrag{ellell, no noise noise}[bl][bl][1.1]{\tiny $\ell_1$, no noise}
\psfrag{Ours, with noise noise}[bl][bl][1.1]{\tiny Ours, with noise}
\psfrag{Ours, no noise noise}[bl][bl][1.1]{\tiny Ours, no noise}
\psfrag{Support size of the vectors}[cl][l][1.2]{\tiny Support size of the vectors}
\psfrag{xrxo2}[bc][l][1]{\tiny $\|\bfx_{\textrm{r}}-\bfx_0\|_2$}
\includegraphics[width=1.1\linewidth,height=0.9\linewidth]{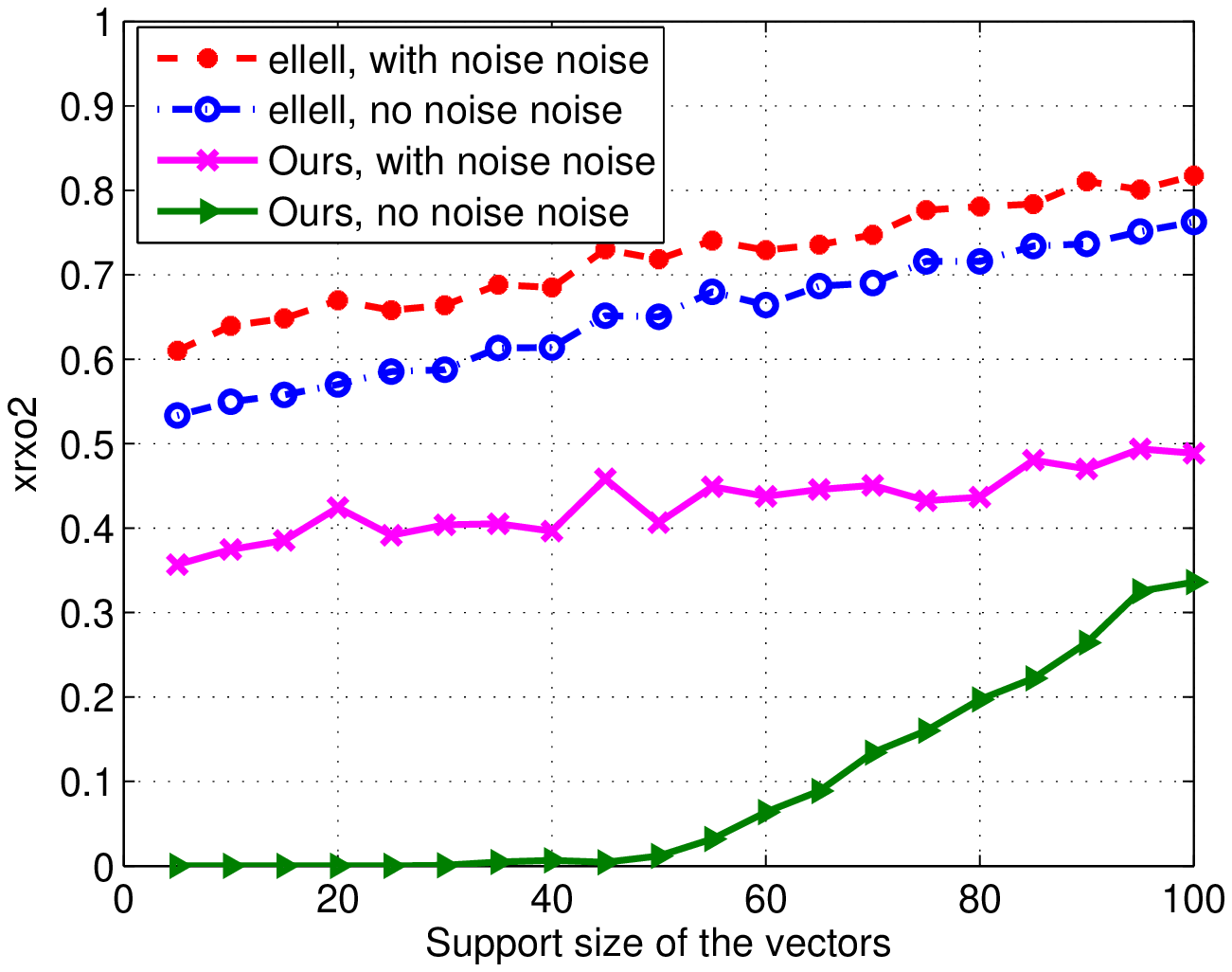}\label{fig:simunoise}
\end{psfrags}\\

{\scriptsize Fig. 8. Recovery performance with hub errors}
\end{tabular}
\end{minipage}
\end{figure*}

\noindent\textbf{{Experiment 1 (Effectiveness of Algorithm 1):}} Given a graph $G$, we consider recovering $1$-sparse vectors associated with $G$. 
Note that $M^C_{1,n}=\lceil \log (n+1) \rceil$ 
and the corresponding measurement matrix has the binary expansion of $i$ as column $i$ \cite{Dorfman43}. Algorithm 1  divides the nodes into groups such that each group (except the last one) can be measured freely via some hub. 
The last group only contains one node and can be measured directly. The total number of measurements by Algorithm 1 is $\sum_i^{q-1} \lceil \log (n_i+1)\rceil+q$, where $n_i$ is the number of nodes in group $i$ and $q$ is the total number of groups.

In Fig. 6, we gradually increase the number of links in a graph with $n=1000$ nodes.  We start with a uniformly generated random tree, and in each step randomly add $25$ links that do not already exist. 
All the results are averaged over one hundred realizations. The number of measurements constructed decreases from 73 to 30 when the number of links increases from $n-1$ to $2n-1$. 
Note that the number of measurements is already within $3 M^C_{1, n}$ when the average node degree is close to 4. The radius of the graph decreases from 13 to 7, and we also plot the upper bound 
in Proposition \ref{prop:algo}. One can see that 
the number of measurements constructed can be much less than the upper bound.

In Fig. 7, we consider the scale-free network with Barab\'asi-Albert (BA) model \cite{BA99} where the graph initially has $m_0$ connected nodes, and each new node connects to $m$ existing nodes with a probability that is proportional to the degree of the existing nodes. We start with a random tree of 10 nodes and increase the total number of nodes from 64 to 1024. Every result is averaged over one hundred realizations. One can see that the number of measurements constructed is proportional to $\log n$, and decreases when $m$ increases.

\vspace{0.1in}

\noindent\textbf{{Experiment 2 (Sparse Recovery Performance with Noise):}} 
Compressed sensing theory indicates that if $A$ is a random 0-1 matrix, 
with overwhelming probability we can recover the sparse vector $\bfx_0$ though $\ell_1$-minimization \cite{CaT05}. 
Here we generate a graph with $n=500$ nodes from BA model. Algorithm 1 divides nodes into four groups with 375, 122, 2 and 1 node respectively. 
For each of the first two groups with size $n_i$ ($i=1,2$), we generate $\lceil n_i/2\rceil$ random measurements each  measuring a random subset of the group together with its hub. We also measure the two hubs directly. 
Each of the three nodes in the next two groups is measured directly by one measurement. The generated matrix $A$ is 254 by 500. We generate a sparse vector $\bfx_0$ with i.i.d. zero-mean Gaussian entries on a randomly chosen support, and normalize $\|\bfx_0\|_2$ to 1. To recover $\bfx_0$ from $\bfy=A\bfx_0$, one can run $\ell_1$-minimization to recover the subvectors associated with the first two groups, and the last three entries of $\bfx_0$ can be obtained from measurements directly. However, note that every measurement for the first two groups passes through its hub, then any error in a hub measurement will affect 
 every measurement for the group of nodes using this hub.
To address this issue, we propose to use a modified $\ell_1$-minimization in which the errors in the two hubs are treated as entries of an augmented vector to recover. Specifically, let the augmented vector $\bfz=[\bfx_0^T ,e_1, e_2]^T$ and the augmented matrix $A'=[A  \  \bm\beta \  \bm \gamma]$, where $e_1$ (or $e_2$) denotes the error in the measurement of the first (second) hub, and the column vector $\bm \beta$ (or $\bm \gamma$) has  `1' in the row corresponding to the measurement of the first (or second) hub and `0' elsewhere. We then recover $\bfz$ (and thus $\bfx_0$) from $\bfy=A'\bfz$ via $\ell_1$-minimization on each group. 
Fig. 8 compares the recovery performance of our modified recovering method and the traditional $\ell_1$-minimization, where the hub errors $e_1$ and $e_2$ are drawn from a Gaussian distribution with zero mean and unit variance. For every support size $k$, we randomly generate one hundred $k$-sparse vectors $\bfx_0$, and let $\bfx_{\textrm{r}}$ denote the recovered vector. Even with the hub errors, the average $\|\bfx_{\textrm{r}}-\bfx_0\|_2$ is within $10^{-6}$ when $\bfx_0$ is at most 25-sparse by our method, while by $\ell_1$-minimization, the value is at least 0.5. We also consider the case that besides errors in  hub measurements,  every other measurement  has i.i.d. zero-mean Gaussian noise.  Let $\bfw$ denote the noise vector and $\|\bfw\|_2$ is normalized to 2.  The average $\|\bfx_{\textrm{r}}-\bfx_0\|_2$ here is smaller with our method than that with $\ell_1$-minimization. 

\section{Conclusion}\label{sec:con}

This paper addresses the sparse recovery problem with graph constraints. 
By providing explicit measurement constructions for different graphs, we derive upper bounds of the minimum number of measurements needed to recover vectors up to certain sparsity. It would be interesting to explore corresponding tight lower bounds. Further efforts are also needed to empirically evaluate the performance of different recovery themes, especially when the measurements are noisy.

\appendix

\subsection{Proof of Theorem \ref{thm:random}}

Let $A^{m
\times n}$ denote the  matrix with $m$ realizations
of the $n$-step Markov chain. To prove the statement, from \cite{CaT05}, we only need
to show that  
the probability that every $2k$ columns of $A$ are linearly independent goes to 1
as $n$ goes to infinity.

Let $A_I$ be a submatrix of $A$ with columns in $I$, where $I$ is an index set with $|I|=2k$.
Let $A_{S_jI}$ ($1\leq j \leq \lfloor \frac{m}{2k}\rfloor$) be a
submatrix of $A_I$ formed by row $2k(j-1)+1$ to row $2kj$ of $A_I$.
Let $P_d^I$ denote the probability that rank($A_I$)$<2k$, and let $\pi _d^I$ denote the probability that rank($A_{S_jI}$)$<2k$
for given $j$. Note that given $I$, $\pi _d^I$ is the same for every
$A_{S_jI}$, $\forall j$. 
Note that rank($A_I$)$< 2k$ implies that rank($A_{S_jI}$)$<2k$ for each
such matrix $A_{S_jI}$, then 
\begin{equation}\label{eqn:pdi}
P_d^I \leq (\pi _d^I)^{\lfloor \frac{m}{2k}\rfloor}.
\end{equation}

To characterize $\pi _d^I$,
 consider matrix $B^{2k \times 2k}$ with $B_{ii}=0$ for
$i=2,3,...,2k$ and $B_{ij}=1$ for all the other elements.
Since rank($B$)$=2k$, then 
\begin{equation}\label{eqn:pidi}
\pi_d^I \leq 1- P(A_{S_jI} \textrm{ is a row permutation of } B).
\end{equation}
One can check that in this Markov chain, for every $ 1 \leq i < k \leq n$, $P(X_k=1~|~X_i=1) \geq 1/2$, $P(X_k=0~|~X_i=1)\geq 1/4$, $P(X_k=1~|~X_i=0) \geq 1/2$,
and $P(X_k=1)\geq 1/2$ by simple calculation.
Since $B$ has $(2k)!$ different row permutations, one can calculate that 
\begin{equation}\label{eqn:pab}
P(A_{S_jI} \textrm{ is a row permutation of } B)\geq
(2k)!/2^{4k^2+2k-1}.
\end{equation}
Combining (\ref{eqn:pdi}), (\ref{eqn:pidi}) and (\ref{eqn:pab}), we
have
\begin{align}
&P(\textrm{every } 2k \textrm{ columns of } A \textrm{ are
linearly independent}) \nonumber \\
=&1-P(\textrm{rank}(A_I)<2k \textrm{ for some } I \textrm { with }
|I|=2k) \nonumber \\
\geq & 1-{n\choose2k}P_d^I  \geq
1-{n\choose2k}e^{-(2k)!(\frac{1}{2})^{4k^2+2k-1}\lfloor\frac{m}{2k}\rfloor},\label{eqn:pind}
\end{align}
where the first inequality follows from the union bound. Then if
$m=g(k) \log n= (2k+1)2^{4k^2+2k-1}\log n /(2k-1)!$,
from (\ref{eqn:pind}) we have the probability that every $2k$ columns of $A$ are linearly
independent is at least $1-1/((2k)!n)$.
Then the statement follows.
\subsection{Proof of Proposition \ref{thm:np}}

Since checking whether or not $r$ given sets 
form an $r$-partition takes polynomial time, $r$-partition problem is NP.

We will show the $r$-partition problem is NP-complete for $r \geq 3$ by proving
that the NP-complete $r$-coloring problem ($r \geq 3$) is polynomial time reducible to the
$r$-partition problem.

Let $G = (V,E)$ and an integer $r$ be an instance of
$r$-coloring. For every $(u, v) \in E$, add a node
$w$ and two links $(w,u)$ and $(w, v)$. Let $W$ denote the set of nodes added. Add a link between every pair of nodes in $V$
not already joined by a link. Let $H$ denote the augmented graph and let $V'$ denote the set of nodes in $H$. 
We claim that if there exists an $r$-partition of $H$, then we can obtain an $r$-coloring of $G$, and vice versa. 

Suppose $S_i$ ($i=1,...,r$) form an $r$-partition of $H$. 
Note that for every $(u, v) \in E$,  $u$ and $v$ cannot belong to the same set $S_i$ for any $i$. 
Suppose $u$ and $v$ both belong to $S_i$ for some $i$. Let $w$ denote the node in $W$ that only directly connects to $u$ and $v$. If $w \in S_i$, then $w$ has both neighbors in the same set with $w$, contradicting the definition of $r$-partition. If $w \notin S_i$, then $H_{V'\backslash S_i}$ is disconnected since $w$ does not connect to any node in $V'\backslash S_i$. It also contradicts the definition of $r$-partition. Thus, for every $(u,v) \in E$, node $u$ and $v$ belong to two  sets $S_i$ and $S_j$ with $i \neq j$. 
Then we obtain an $r$-coloring of $G$.

Let $C_i \subset V$ ($i=1,...,r$) denote an $r$-coloring of $G$. 
We claim that $N_i=C_i$ ($i=1, ..., r-1$), and $N_r=C_r \cup W$ form an $r$-partition of $H$. 
First note for every $u \in V$, at least one of its neighbors is not in the same set as $u$ since $H_V$ is a complete subgraph. 
For every  $w \in W$, $w$ is directly connected to $u$ and $v$ with $(u,v) \in E$. From the definition of $r$-coloring, 
$u$ and $v$ are in different sets $C_i$ and $C_j$ for some $i \neq j$. Therefore, $w$ has at least one neighbor that is not in $N_r$. Second, we will show $H_{V' \backslash N_i}$ is connected for all $i$. $H_{V' \backslash N_r}$ is in fact a complete graph, and thus connected. For every $i<r$, let $S_i:=V \backslash C_i$, then $V' \backslash N_i= S_i \cup W$. $H_{S_i}$ is a complete subgraph, and thus connected. For every $w  \in W$, since its two neighbors cannot be both in $C_i$, then at least one neighbor belongs to $S_i$, 
thus $H_{V' \backslash N_r}=H_{S_i \cup W}$ is connected. $N_i$ ($i=1,..., r$) indeed forms an $r$-partition.
%

\bibliographystyle{IEEEtranS}

\end{document}